\MyFrame\begin{minipage}{#1}}%
\theoremstyle{plain}
\newtheorem{theorem}{Theorem}
\newtheorem{lemma}[theorem]{Lemma}
\newtheorem{claim}{Claim}
\theoremstyle{definition}
\newtheorem{problem}{Problem}
\theoremstyle{remark}
\newtheorem{fact}{Fact}
\DeclareMathOperator{\suchthat}{\text{ }|\text{ }}
\DeclareMathOperator{\opt}{OPT}
\DeclareMathOperator{\poly}{poly}
\begin{document}
\title{Metrical Service Systems with Multiple Servers}
\author{Ashish Chiplunkar \qquad Sundar Vishwanathan\\
Department of Computer Science and Engineering,\\
Indian Institute of Technology Bombay,\\
Mumbai, India\\
\texttt{\{ashishc, sundar\}@cse.iitb.ac.in}
}
\date{}
\maketitle

\begin{abstract}
We study the problem of metrical service systems with multiple servers (MSSMS), which generalizes two well-known problems -- the $k$-server problem, and metrical service systems. 
The MSSMS problem is to service requests, each of which is an $l$-point subset of a metric space, using $k$ servers, with the objective of minimizing the total distance traveled by the servers.

Feuerstein \cite{Feuerstein98} initiated a study of this problem by proving upper and lower bounds on the deterministic competitive ratio for uniform metric spaces. We improve Feuerstein's analysis of the upper bound and prove that his algorithm achieves a competitive ratio of $k\left({{k+l}\choose{l}}-1\right)$.
In the randomized online setting, for uniform metric spaces, we give an algorithm which achieves a competitive ratio $\mathcal{O}(k^3\log l)$, beating the deterministic lower bound of ${{k+l}\choose{l}}-1$.
We prove that any randomized algorithm for MSSMS on uniform metric spaces must be $\Omega(\log kl)$-competitive. We then prove an improved lower bound of ${{k+2l-1}\choose{k}}-{{k+l-1}\choose{k}}$ on the competitive ratio of any deterministic algorithm for $(k,l)$-MSSMS, on general metric spaces. 
In the offline setting, we give a pseudo-approximation algorithm for $(k,l)$-MSSMS on general metric spaces, which achieves an approximation ratio of $l$ using $kl$ servers. We also prove a matching hardness result, that a pseudo-approximation with less than $kl$ servers is unlikely, even for uniform metric spaces. For general metric spaces, we highlight the limitations of a few popular techniques, that have been used in algorithm design for the $k$-server problem and metrical service systems.
\end{abstract}

\section{Introduction}

The problem of metrical service systems with multiple servers (MSSMS) generalizes two well-known problems -- the $k$-server problem \cite{ManasseMS88} and metrical service systems (MSS) \cite{ChrobakL92,ManasseMS88}. These problems share a common paradigm, that there is an underlying metric space 
and requests are to be served by moving servers on the metric space, in such a way that the total distance traveled by the servers is minimized.
For a problem in the online setting, the input is revealed to an algorithm piece by piece, and the algorithm must take irrevocable decisions on seeing each piece. In case of the aforementioned problems, each piece in the input is a request, and the irrevocable decisions are the movements of the servers. A (possibly randomized) online algorithm is said to be $c$-competitive if, on every input, it returns a solution whose (expected) cost is at most $c$ times the cost of optimal solution for that input. The book by Borodin and El-Yaniv \cite{BorodinE} gives a nice comprehensive introduction to online algorithms and competitive analysis.

\noindent
\textbf{The $k$-server problem:} The $k$-server problem of Manasse et. al. \cite{ManasseMS88} is, arguably, the most famous among the problems that are naturally posed in the online setting. The following quote by Koutsoupias, in his beautiful survey on the $k$-server problem \cite{Koutsoupias09}, upholds the importance of this problem.
\begin{quote}
\textit{The $k$-server problem is perhaps the most influential online problem: natural, crisp, with a surprising technical depth that manifests the richness of competitive analysis. The $k$-server conjecture, which was posed more than two decades ago when the problem was first studied within the competitive analysis framework, is still open and has been a major driving force for the development of the area online algorithms.}
\end{quote}
In the $k$-server problem, we have $k$ servers occupying points in a metric space. Each request is a point in the metric space. To serve the request, one of the servers has to be moved to the requested point. The $k$-server conjecture referred to in the quote states that there is a $k$-competitive deterministic algorithm for the $k$-server problem.

Manasse et. al. \cite{ManasseMS88} proved a lower bound of $k$ on the competitive ratio of any deterministic algorithm on any metric space with more than $k$ points. They proved that the competitive ratio is $k$ only for very specific cases, and posed the $k$-server conjecture.
The conjecture has been shown to hold for a few metric spaces, for example, the line \cite{ChrobakKPV91} and tree metric spaces \cite{ChrobakL91}.
Fiat, Rabani and Ravid \cite{FiatRR90} were the first to give an algorithm for the $k$-server problem, with competitive ratio bounded by a function of $k$, which was later improved by Grove \cite{Grove91,BartalG00}. The breakthrough result was due to Koutsoupias and Papadimitriou, who proved that an algorithm, first proposed by Chrobak and Larmore, and called the \textit{Work Function Algorithm} (WFA), is $(2k-1)$-competitive \cite{KoutsoupiasP95}.
In case of randomized algorithms, the best known lower bound that holds for every metric space is $\Omega(\log k/\log\log k)$ due to Bartal et. al. \cite{BartalBM01}, and there do exist metric spaces with a lower bound more than $H_k$ \cite{KarlinMMO94}. No better algorithm than the Work Function Algorithm is known, even with randomization. The randomized $k$-server conjecture states that there exists a randomized algorithm for the $k$-server problem with competitive ratio $\mathcal{O}(\log k)$ on any metric space. 
Recent developments, which ingeniously adapt the primal-dual framework to the online setting, have been applied to the $k$-server problem, culminating in a $\mathcal{O}(\log k)$-competitive randomized algorithm for star metrics \cite{BansalBN07}, and a $\mathcal{O}(\poly\log(k)\poly\log(n))$-competitive randomized algorithm on metric spaces of $n$ points \cite{BansalBMN11}.

\noindent
\textbf{The Generalized Server Problem:} This is a generalization of the $k$-server problem, in which the metric space is a disjoint union of $k$ metric spaces, mutually separated by an infinite distance. A server is located at one point in each of the subspaces. A request is a set of $k$ points, one from each subspace. The request is to be served by moving some server to the requested point which lies in its subspace. An interesting problem, called the Weighted Server problem \cite{FiatR94} is a particular case of the Generalized Server problem. This problem is same as the $k$-server problem, except that the servers have different weights, and the cost of moving a server is equal to the product of its weight and the distance covered. We can thus think of this as the Generalized Server problem where the metric spaces are scaled copies of one another. Fiat and Ricklin \cite{FiatR94} were the first to study the Weighted Server problem. They gave a deterministic algorithm with a competitive ratio of $2^{2^{\mathcal{O}(k)}}$ for uniform metric spaces. They proved that for every metric space there exist weights so that any deterministic algorithm must have a competitive ratio of $(k+1)!/2$. Chrobak and Sgall \cite{ChrobakS04} studied the weighted $2$-server problem on uniform spaces, and proved that the Work Function Algorithm achieves the best possible competitive ratio of $5$. They proved that in contrast with the $k$-server problem, there does not exist a memoryless randomized algorithm with a finite competitive ratio, even for the weighted $2$-server problem. Recently Sitters \cite{Sitters11} proved that the Work Function Algorithm, in fact, is competitive for the generalized $2$-server problem.

\noindent
\textbf{Metrical Service System:} The term Metrical Service System (MSS) was coined by Chrobak and Larmore \cite{ChrobakL93} for the following problem. We have a single server in an underlying metric space. Each request is a set of $l$ points from the metric space, where $l$ is called the \textit{width}, and is a parameter to the problem. To serve a request, the server has to be dispatched to one of the requested points.

Finding shortest paths is a fundamental problem in the offline setting. In the online setting it is posed as the problem of Layered Graph Traversal (LGT). This problem, introduced by Papadimitriou and Yannakakis \cite{PapadimitriouY91}, was a precursor of MSS. Fiat et. al. \cite{FiatFKRRV98} proved that MSS and LGT are in fact equivalent problems. That is, there is a $c$-competitive algorithm for MSS if and only if there is a $c$-competitive algorithm for LGT. They also proved that there exists a metric space on which the competitive ratio of any deterministic algorithm for the MSS problem is $\Omega(2^l)$.
Further, they gave an $\mathcal{O}(9^l)$-competitive algorithm. They proved that $l/2$ is a lower bound on the competitive ratio of any randomized algorithm for LGT. Ramesh \cite{Ramesh95} gave a better deterministic algorithm for LGT, which achieves a competitive ratio of $l^32^l$, and a randomized $l^{13}$-competitive algorithm. He proved that there exists a metric space on which any randomized algorithm must have competitive ratio $\Omega(l^2/\log^{1+\varepsilon}l)$, for any $\varepsilon>0$. Burley \cite{Burley96} proved that a variant of the Work Function Algorithm is $\mathcal{O}(l\cdot2^l)$-competitive for the MSS (and hence, the LGT) problem. For the uniform metric space, Chrobak and Larmore \cite{ChrobakL93} proved a lower bound of $l$ on the competitive ratio of any deterministic algorithm, and also gave an algorithm achieving this bound. It is easily seen that the lower bound holds for any metric space with at least $l+1$ points.

\noindent
\textbf{Metrical Service System with Multiple Servers:} In a natural generalization of both the $k$-server problem and metrical service system, we have $k$ servers on an underlying metric space, and each request is a set of $l$ points from the metric space. To serve a request, one of the $k$ servers has to move to one of the $l$ requested points. We call this problem Metrical Service System with Multiple Servers, with parameters $k$ and $l$ ($(k,l)$-MSSMS). It is easy to see that this problem is in fact, a further generalization of the Generalized Server problem. Feuerstein \cite{Feuerstein98} studied this problem for uniform metric spaces and called it the Uniform Service System with parameters $k$ and $l$ (USS$(k,l)$). He proved a lower bound of ${{k+l}\choose{k}}-1$ on the competitive ratio of any deterministic algorithm for this problem. In fact, this proof holds for any (not necessarily uniform) metric space with at least $k+l$ points. Feuerstein also gave an algorithm and proved that its competitive ratio is $k\cdot\min\left(\frac{k^{l+1}-k}{k-1},\sum_{i=0}^{k-2}l^i+l^k\right)$. He concluded the paper with the following comment.
\begin{quote}
\textit{An interesting subject of future research is to extend USS$(k,l)$\footnote{Feuerstein used `$w$' for the width parameter, while we use `$l$'.} to non-uniform metric spaces. This would extend both the work in this paper and the work by Chrobak and Larmore \cite{ChrobakL92} on Metrical Service Systems, where only one server is considered.}
\end{quote}





\noindent
\textbf{Our results}: In Section \ref{sec_hitting_set} we present a simple analysis of Feuerstein's algorithm, which improves the bound on its competitive ratio proved in \cite{Feuerstein98}, to $k\cdot\left({{k+l}\choose{l}}-1\right)$. In Section \ref{sec_uniform_random} we give a $\mathcal{O}(k^3\log l)$-competitive randomized algorithm 
on uniform metric spaces, which beats the deterministic lower bound by an exponential factor. We also give an $\Omega(\log kl)$ lower bound on the competitive ratio of any randomized algorithm. 
We then give, in Section \ref{sec_general_lb}, a lower bound of ${{k+2l-1}\choose{k}}-{{k+l-1}\choose{k}}$ on the competitive ratio of any deterministic algorithm for $(k,l)$-MSSMS, even when the metric space has only two distances. For arbitrary metric spaces, we prove a lower bound which is exponential in $k$ for any fixed $l$.
These lower bounds are improvements to the bound of ${{k+l}\choose{k}}-1$ in \cite{Feuerstein98}. In Section \ref{sec_offline} we consider the offline $(k,l)$-MSSMS problem on arbitrary metric spaces, and we give a pseudo-approximation by a factor of $l$ using $kl$ servers. We prove a matching lower bound, assuming the Unique Games Conjecture. We conclude with a number of interesting open problems in Section \ref{sec_open_problems}. 

\section{Uniform metric spaces: The Hitting Set Algorithm}\label{sec_hitting_set}

In this section, 
we analyze the algorithm for $(k,l)$-MSSMS on uniform metric spaces given by Feuerstein (Section 2 of \cite{Feuerstein98}), which he calls the \textit{Hitting Set} algorithm. Feuerstein proved that the competitive ratio of this algorithm is at most $k\cdot\min\left(\frac{k^{l+1}-k}{k-1},\sum_{i=0}^{k-2}l^i+l^k\right)$, whereas we prove an asymptotically better bound,\footnote{For instance, when $k=\Theta(l)$, Feuerstein's bound is $\Omega(l^l)$, whereas ours is $\mathcal{O}(c^l)$ for some constant $c$.} of $k\cdot\left({{k+l}\choose{l}}-1\right)$. 

The Hitting Set (HS) algorithm can be described as follows. HS divides the request sequence into phases, the first phase starting with the first request of the sequence. We say that a request produces a fault whenever the requested set of points is disjoint from the set of points occupied by the servers. 
Each time a request produces a fault, the algorithm behaves as follows. First, it computes a minimum cardinality set $H$ of points that intersects all the requests that produced a fault during the current phase. If $|H|\leq k$ then any $|H|$ servers are made to occupy all points in $H$. Otherwise, if $|H|>k$, then the phase terminates and a new phase begins with the current request.

\begin{theorem}
HS is an $k\cdot\left({{k+l}\choose{l}}-1\right)$-competitive algorithm for $(k,l)$-MSSMS.
\end{theorem}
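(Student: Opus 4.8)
The plan is to use the standard phase-based amortization: bound HS's cost within a single phase from above, bound OPT's cost from below by the number of phases, and combine. For the upper bound on HS, observe that on each fault the algorithm recomputes a minimum hitting set $H$ with $|H|\le k$ and moves servers onto $H$; since the metric is uniform, this reconfiguration costs at most $k$ (at most $k$ servers, each moving a unit distance). Hence HS's cost in a phase is at most $k$ times the number of faults in that phase, and the whole statement reduces to two independent claims: (i) every phase contains at most $\binom{k+l}{l}-1$ faults, and (ii) OPT pays at least one per completed phase. Multiplying (i) by the per-fault bound $k$ and dividing by (ii) yields the competitive ratio $k\left(\binom{k+l}{l}-1\right)$, up to the cost of the single, possibly incomplete, last phase.

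For (ii) I would argue as follows. A phase is completed exactly when its fault requests, together with the terminating request (which becomes the first request of the next phase), admit no hitting set of size $\le k$. Thus no fixed placement of $k$ servers serves all the requests in this window, so OPT must change its configuration at some transition inside the window; moreover this change must occur strictly after the window's first request, since leaving the configuration constant from that request onward would serve the whole window. Because the window of phase $j+1$ begins precisely at the terminating request of phase $j$, the ranges of transitions available to break consecutive windows are disjoint, so the forced moves are distinct. Hence OPT's cost is at least the number of completed phases.

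The heart of the proof is claim (i). I would let $N(k,l)$ denote the maximum number of faults possible in a phase and prove $N(k,l)\le\binom{k+l}{l}-1$ by induction on $k+l$. The base cases are easy: for $l=1$ each fault is a fresh point, so after $k$ faults the hitting set has size $k$ and the $(k+1)$-st distinct point ends the phase, giving $N(k,1)=k=\binom{k+1}{1}-1$; for $k=1$ the common intersection of the fault requests is nonempty throughout, starts at size $l$, and loses at least one point per fault, giving $N(1,l)=l=\binom{l+1}{l}-1$. Since Pascal's rule gives $\binom{k+l}{l}-1=\left(\binom{k+l-1}{l}-1\right)+\left(\binom{k+l-1}{l-1}-1\right)+1$, it suffices to establish the recursion $N(k,l)\le N(k-1,l)+N(k,l-1)+1$. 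The structural fact driving everything is that if $r_i$ is a fault then $r_i$ is disjoint from the minimum hitting set $H_{i-1}$ chosen after the previous fault, and $|H_i|\le|H_{i-1}|+1$, so the hitting set grows by at most one per fault.

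To prove the recursion I would fix a point $x\in r_1$ placed by the algorithm for the first request (the ``$+1$''), and split the remaining faults according to whether the active hitting set $H_{i-1}$ contains $x$. When $x\in H_{i-1}$, a fault $r_i$ must avoid $x$ and is disjoint from $H_{i-1}\setminus\{x\}$, a hitting set of the previous $x$-avoiding requests of size $\le k-1$; these faults should be charged to a $(k-1,l)$ sub-instance, contributing at most $N(k-1,l)$. When $x\notin H_{i-1}$, the set $H_{i-1}$ still hits $r_1$ through some other point, so $r_1$'s role is played by the $(l-1)$-set $r_1\setminus\{x\}$; these faults should be charged to a $(k,l-1)$ sub-instance, contributing at most $N(k,l-1)$. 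The main obstacle is making this two-way charging fully rigorous: membership of $x$ in the minimum hitting sets need not be monotone (it can leave and re-enter), and one must verify that each group, after reindexing, is a legal fault sequence for the reduced parameters with its own valid minimum hitting sets. Resolving this bookkeeping — for instance by choosing $x$ canonically and cutting the phase at the first time $x$ permanently leaves the hitting set — is where the real work lies; the base cases, the per-fault cost estimate, and the OPT lower bound are all routine.
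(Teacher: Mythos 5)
Your outer frame --- cost at most $k$ per fault, at most $\binom{k+l}{l}-1$ faults per phase, OPT at least $1$ per completed phase --- matches the paper exactly, and your argument for the OPT lower bound is fine (the paper simply cites Feuerstein for it). But the entire content of the theorem is your claim (i), and your proposal does not prove it; it sketches an induction and then, as you yourself concede, leaves the decisive step unresolved. The obstacles you flag are genuine, and they are worse than ``bookkeeping.'' The hitting set $H_{i-1}$ that the algorithm occupies before fault $r_i$ is a \emph{minimum hitting set of all faults of the phase so far}, recomputed globally at every fault. When you split the faults into the group with $x\in H_{i-1}$ and the group with $x\notin H_{i-1}$, the sets $H_{i-1}\setminus\{x\}$ (respectively $H_{i-1}$) are hitting sets of the subgroup's earlier requests, but there is no reason they are \emph{minimum} hitting sets of that subgroup, nor that their sizes track the subgroup's own minimum hitting-set sizes. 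So neither subgroup is a legal run of HS with parameters $(k-1,l)$ or $(k,l-1)$, and the inductive hypothesis $N(k-1,l)$, $N(k,l-1)$ cannot be invoked. Combined with the non-monotone membership of $x$ that you mention, the recursion $N(k,l)\le N(k-1,l)+N(k,l-1)+1$ is at this point a conjecture consistent with Pascal's rule, not a proof.

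The paper's proof avoids induction entirely and is worth internalizing because it isolates exactly the two properties you already identified. Let $A_1,\ldots,A_{r-1}$ be the faults of a phase, let $A_r$ be the first request of the next phase, and let $B_i$ be the $k$ points occupied by the servers when $A_i$ arrives. Faulting gives $A_i\cap B_i=\emptyset$; and since $B_i$ contains the hitting set computed after the fault on $A_{i-1}$, which hits every earlier fault of the phase, $A_j\cap B_i\neq\emptyset$ for all $j<i$. This is precisely the hypothesis of the skew Bollob\'{a}s theorem (Lov\'{a}sz) for set pairs with $|A_i|=l$, $|B_i|=k$, which yields $r\le\binom{k+l}{k}$, hence at most $\binom{k+l}{l}-1$ faults per phase. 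If you want to salvage your inductive route, you would essentially have to reprove a form of this set-pair inequality, since the cross-intersection structure $(A_i\cap B_i=\emptyset$, $A_j\cap B_i\neq\emptyset$ for $j<i)$ is all that survives of the algorithm's behavior; citing it directly is the clean way through.
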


\begin{proof}
Feuerstein observed that the adversary must incur a cost of at least one per phase. He then proved that at most $\min\left(\frac{k^{l+1}-k}{k-1},\sum_{i=0}^{k-2}l^i+l^k\right)$ requests can produce faults, in any phase. We improve this upper bound to ${{k+l}\choose{l}}-1$, and our claim follows, since the algorithm pays at most $k$, for every request that produces a fault. Our proof uses the following result from extremal combinatorics, due to Lov\'{a}sz \cite{Lovasz_skewBollobas}.

\begin{fact}[Skew Bollob\'{a}s Theorem]
Let $A_1,\ldots,A_r$ and $B_1,\ldots,B_r$ be sets such that $|A_i|=l$ and $|B_i|=k$ for all $i$. Suppose $A_i\cap B_i=\emptyset$ for all $i$, and $A_j\cap B_i\neq\emptyset$ for all $i,j$ with $j<i$. Then $r\leq{{k+l}\choose{k}}$.
\end{fact}

Let $A_1,\ldots, A_{r-1}$ be the requests that produced a fault in a given phase, and let $A_r$ be the first request in the next phase, which must also have produced a fault. Let $B_i$ be the set of points occupied by the servers when the request $A_i$ was given. Clearly, $|A_i|=l$ and $|B_i|=k$ for all $i$.
Since $A_i$ produced a fault, we have for every $i$ with $1\leq i\leq r$, $A_i\cap B_i=\emptyset$. 
On the other hand, since no request between $A_{i-1}$ and $A_i$ produced a fault, the hitting set chosen to serve $A_{i-1}$ must be $B_i$ itself. By the definition of the algorithm, for any $i$, $j$ with $1\leq j<i\leq r$ we have $A_j\cap B_i\neq\emptyset$. 
Applying the skew Bollob\'{a}s theorem, we have $r\leq{{k+l}\choose{k}}$. Thus, the number of requests in the phase, that produced a fault, is $r-1\leq{{k+l}\choose{k}}-1$, as required.
\end{proof}

\section{Uniform metric spaces: Randomized bounds}\label{sec_uniform_random}

In this section, we give a randomized version of the Hitting Set algorithm from Section \ref{sec_hitting_set}, which we call the Randomized Hitting Set (RHS) algorithm, and prove that its competitive ratio is $\mathcal{O}(k^3\log l)$. The algorithm is as follows.

RHS divides the sequence of requests in phases, just like the Hitting Set algorithm. Each time a request produces a fault, the algorithm behaves as follows. First, it computes $s$, the minimum cardinality of a set of points that intersects all the requests in the current phase given so far. If $s\leq k$, it chooses a set $H$ uniformly at random from the collection of all the hitting sets of size $s$, and then any $s$ servers are made to occupy all points in $H$. Otherwise, if $s>k$, the current phase ends and a new phase begins with the current request.

It is easily seen that the adversary must incur a cost of at least one per phase. We prove that the expected cost of RHS is $\mathcal{O}(k^3\log l)$ per phase. Note that the value of $s$ increases from $1$ to $k$ as the algorithm progresses in a phase. We divide the phase into $k$ sub-phases, where the $i^{\text{\tiny{th}}}$ sub-phase is the part of the phase when the value of $s$ is equal to $i$. We require the following combinatorial lemma.

\begin{lemma}\label{lem_hittingset}
Let $s$ be the size of the smallest hitting set of an $l$-uniform set system $\mathcal{S}$. Then the number of minimum hitting sets of $\mathcal{S}$ is at most $l^s$.
\end{lemma}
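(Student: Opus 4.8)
The plan is to induct on $s$, the size of a minimum hitting set, peeling off one element at a time from a single, fixed member of $\mathcal{S}$. For the base case I take $s=0$: then $\mathcal{S}=\emptyset$, its only (minimum) hitting set is $\emptyset$, and $l^{0}=1$ matches. (Equivalently one can start at $s=1$, where every singleton hitting set must lie inside any fixed member of $\mathcal{S}$, of which there are $l$.)

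For the inductive step, fix an arbitrary set $S_0\in\mathcal{S}$, so $|S_0|=l$. Since every hitting set must intersect $S_0$, each minimum hitting set $H$ of $\mathcal{S}$ contains at least one element of $S_0$; hence the number of minimum hitting sets is at most $\sum_{x\in S_0} N_x$, where $N_x$ denotes the number of minimum hitting sets of $\mathcal{S}$ that contain $x$. The heart of the argument is to show $N_x\leq l^{s-1}$ for every $x\in S_0$. For a given $x$, set $\mathcal{S}_x=\{S\in\mathcal{S}\suchthat x\notin S\}$, the $l$-uniform subsystem of sets that $x$ fails to hit. Any hitting set of $\mathcal{S}_x$ becomes a hitting set of $\mathcal{S}$ upon adjoining $x$, so the minimum hitting set size of $\mathcal{S}_x$ is at least $s-1$. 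If moreover some minimum hitting set $H\ni x$ exists, then $H\setminus\{x\}$ hits $\mathcal{S}_x$ and has size $s-1$, so the minimum hitting set size of $\mathcal{S}_x$ is exactly $s-1$; in that case the map $H\mapsto H\setminus\{x\}$ is a bijection (inverse $H'\mapsto H'\cup\{x\}$) from the minimum hitting sets of $\mathcal{S}$ containing $x$ onto the minimum hitting sets of $\mathcal{S}_x$, and the induction hypothesis applied to $\mathcal{S}_x$ yields $N_x\leq l^{s-1}$.

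Summing over the $l$ elements of $S_0$ then gives at most $\sum_{x\in S_0} l^{s-1}=l\cdot l^{s-1}=l^{s}$ minimum hitting sets, completing the induction. Note that overcounting is harmless here: an $H$ meeting $S_0$ in several points is tallied in several of the classes, but since we only want an upper bound, summing over all $x\in S_0$ (rather than choosing a canonical representative) is legitimate.

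I expect the one genuine subtlety — easy to overlook — to be that the reduction to $\mathcal{S}_x$ is only valid for those $x$ that actually occur in some minimum hitting set of $\mathcal{S}$. For an $x$ lying in no minimum hitting set one simply has $N_x=0\leq l^{s-1}$ directly, so no claim about the minimum hitting set size of $\mathcal{S}_x$ is needed; it would in fact be false to assert that this size equals $s-1$ for \emph{every} $x\in S_0$. Keeping the two cases ($N_x=0$ versus $N_x>0$) separate is what makes the inductive bound $N_x\leq l^{s-1}$ hold uniformly.
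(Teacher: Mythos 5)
Your proof is correct and follows essentially the same route as the paper's: induction on $s$, fixing an arbitrary member $S_0\in\mathcal{S}$, classifying minimum hitting sets by an element $x\in S_0$ they contain, reducing to the subsystem $\mathcal{S}_x$ of sets avoiding $x$, and summing the bound $l^{s-1}$ over the $l$ elements of $S_0$. In fact, your explicit two-case treatment of elements lying in no minimum hitting set is slightly more careful than the paper's own write-up, which tacitly assumes the corresponding family $\mathcal{H}_e$ is nonempty when asserting that the minimum hitting set size of $\mathcal{S}_e$ is exactly $s-1$; as you note, for such elements the bound $N_x=0\leq l^{s-1}$ holds trivially, so the overall argument is unaffected.
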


\begin{proof}
We prove the claim by induction on $s$. For $s=1$, the claim is obvious. For a general $s$, let $\mathcal{H}$ be the collection of all minimum hitting sets of $\mathcal{S}$. Let $A'$ be an arbitrary set in $\mathcal{S}$. For each $e\in A'$, let $\mathcal{S}_e=\{A\in\mathcal{S}\suchthat e\notin A\}$ and $\mathcal{H}_e=\{B\setminus\{e\}\suchthat B\in\mathcal{H}\text{, }e\in B\}$. 
Then each set in $\mathcal{H}_e$ hits each set in $\mathcal{S}_e$, so the size of smallest hitting set of $\mathcal{S}_e$ is at most $s-1$. Further, if $\mathcal{S}_e$ has a hitting set $H$ of size less than $s-1$, then $H\cup\{e\}$ would be a hitting set of $\mathcal{S}$ of size less than $s$. This is a contradiction. Thus, the size of the smallest hitting set of $\mathcal{S}_e$ is exactly $s-1$. Therefore, by induction, we have $|\mathcal{H}_e|\leq l^{s-1}$, for all $e\in A'$. Now each set in $\mathcal{H}$ intersects $A'$ in some $e$, and therefore, contributes to $\mathcal{H}_e$, for some $e$. Thus, $|\mathcal{H}|\leq\sum_{e\in A'}|\mathcal{H}_e|\leq l\cdot l^{s-1}=l^s$.
\end{proof}

\begin{theorem}
RHS is an $\mathcal{O}(k^3\log l)$ competitive algorithm for $(k,l)$-MSSMS.
\end{theorem}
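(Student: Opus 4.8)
Since the adversary must pay at least one per phase, it suffices to bound the expected cost of RHS by $\mathcal{O}(k^3\log l)$ per phase, and I would do this sub-phase by sub-phase. Recall that in the $i$-th sub-phase we have $s=i$, so RHS always holds a hitting set of size $i$ drawn uniformly at random from the collection of all minimum hitting sets of size $i$ of the requests seen so far in the phase. Let $\mathcal{C}_r$ denote this collection just before the $r$-th request of the sub-phase is processed. Adding a request can only delete those sets which fail to hit it, so the collections are nested, $\mathcal{C}_1\supseteq\mathcal{C}_2\supseteq\cdots$, and by Lemma \ref{lem_hittingset} we have $|\mathcal{C}_1|\le l^i$. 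The sub-phase ends precisely when a request deletes the last surviving size-$i$ hitting set, forcing $s$ to increase.

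The first key step is the invariant that, conditioned on the current collection $\mathcal{C}_r$, the set currently held by RHS is uniformly distributed over $\mathcal{C}_r$. This holds at the first draw by definition and is preserved inductively: when a request deletes some sets but not the held set, the held set stays uniform over the survivors by symmetry; when the request deletes the held set (a fault), RHS redraws uniformly from the survivors. With this invariant, the probability of a fault at request $r$ equals $|K_r|/|\mathcal{C}_r|$, where $K_r=\mathcal{C}_r\setminus\mathcal{C}_{r+1}$ is the set of hitting sets deleted by request $r$, since a fault occurs exactly when the held set fails to hit the request, i.e. exactly when it lies in $K_r$.

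Summing over the sub-phase, the expected number of faults is $\sum_r |K_r|/|\mathcal{C}_r|$. Since $|\mathcal{C}_{r+1}|=|\mathcal{C}_r|-|K_r|$ and $\frac{|K_r|}{|\mathcal{C}_r|}\le\sum_{n=|\mathcal{C}_{r+1}|+1}^{|\mathcal{C}_r|}\frac1n$, this sum telescopes to at most $H_{|\mathcal{C}_1|}\le H_{l^i}\le i\ln l+1$. Each fault costs at most $i+1$: in a uniform metric, reoccupying a hitting set of size at most $i+1$ starting from one of size at most $i$ moves at most $i+1$ servers. Hence the expected cost of the $i$-th sub-phase is at most $(i+1)(i\ln l+1)$, and summing over $i=1,\ldots,k$ yields $\sum_{i=1}^k (i+1)(i\ln l+1)=\mathcal{O}(k^3\log l)$ per phase, which gives the claimed competitive ratio.

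The main obstacle is the fault-counting step: everything hinges on maintaining the ``uniformly random survivor'' invariant and then recognising the resulting expected fault count $\sum_r |K_r|/|\mathcal{C}_r|$ as a harmonic sum bounded by $\ln|\mathcal{C}_1|$. This is precisely where randomisation buys the exponential improvement over the deterministic Hitting Set bound: the deterministic algorithm may fault once for essentially every minimum hitting set it can be forced to abandon, whereas holding a uniform survivor reduces the expected number of faults to an $\mathcal{O}(\log)$ fraction of that quantity. The remaining ingredients --- the bound $|\mathcal{C}_1|\le l^i$ from Lemma \ref{lem_hittingset}, the per-fault cost bound, and the adversary's cost of at least one per phase --- are routine.
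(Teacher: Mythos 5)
Your proof is correct, and it shares the paper's overall skeleton: the same reduction to bounding the expected cost per phase, the same sub-phase decomposition by the value of $s$, the same invocation of Lemma \ref{lem_hittingset} to get the $l^s$ bound, the same harmonic-sum telescoping, and the same per-fault cost accounting. The genuine difference is in how the per-request fault probability is obtained. The paper conditions on the index $j$ of the most recent fault, introduces the renewal events $E_{ji}$ with $\Pr[E_{ji}\mid j\in T]=(h_{i-1}-h_i)/h_j$, and then proves $\Pr[i\in T]=(h_{i-1}-h_i)/h_{i-1}$ by an explicit induction on $i$. You instead establish directly the invariant that the hitting set held by RHS is, at every moment, uniformly distributed over the current collection of surviving minimum hitting sets --- preserved because conditioning a uniform draw on survival leaves it uniform over the survivors, and a redraw after a fault is uniform by construction --- from which the fault probability $|K_r|/|\mathcal{C}_r|$ is immediate. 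The two routes produce the identical formula (your $|K_r|/|\mathcal{C}_r|$ is the paper's $(h_{i-1}-h_i)/h_{i-1}$), but your invariant makes explicit the memorylessness that the paper's induction only verifies computationally, and it is arguably the more transparent argument. A minor bookkeeping difference: you place each boundary request (the one killing the last size-$i$ hitting set) inside sub-phase $i$, which is why you charge $i+1$ per fault and why the forced fault at each boundary (the paper's ``$1\in T$'' term) is automatically included in your sum $\sum_r |K_r|/|\mathcal{C}_r|$; this convention is self-consistent and yields the same $\mathcal{O}(s^2\log l)$ per sub-phase and $\mathcal{O}(k^3\log l)$ per phase.
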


\begin{proof}
First, observe that the boundaries between sub-phases are completely determined by the requests, and are independent of the algorithm. Consider the $s$'th sub-phase of any phase. Let $A_1,\ldots,A_r$ be the sets requested in this sub-phase. Let $\mathcal{S}_0$ be the collection of sets requested in the current phase before the $s$'th sub-phase started (that is, before $A_1$ was requested), and let $\mathcal{S}_i=\mathcal{S}_0\cup\{A_1,\ldots,A_i\}$. For $i\geq1$, let $\mathcal{H}_i$ be the collection of all minimum hitting sets of $\mathcal{S}_i$, and $h_i=|\mathcal{H}_i|$. Thus, each $\mathcal{H}_i$ is an $s$-uniform set system, and $\mathcal{H}_1\supseteq\cdots\supseteq\mathcal{H}_r\neq\emptyset$.

Let $T$ be the set of all $i$ such that the algorithm faulted on $A_i$, and let $H_i$ be the hitting set chosen uniformly at random from $\mathcal{H}_i$, to serve $A_i$. Clearly, $1\in T$, otherwise, the $s$'th sub-phase would not have started with $A_1$. We need an upper bound on $\mathsf{E}[|T|]=\sum_{i=1}^r\Pr[i\in T]$. 

For $i\geq2$ and $j<i$, say that the event $E_{ji}$ has occurred if $i,j\in T$, but no $i'$ between $j$ and $i$ is in $T$. Given that $j\in T$, this event occurs exactly when $H_j$, chosen uniformly at random from $\mathcal{H}_j$, turns out to be in $\mathcal{H}_{i-1}$, but not in $\mathcal{H}_i$. Thus, we have $\Pr[E_{ji}\suchthat j\in T]=(h_{i-1}-h_i)/h_j$. Note that for a fixed $i$, the events $E_{ji}$ are pairwise disjoint, and $i\in T$ if and only if $E_{ji}$ occurs for some $j<i$. Hence,
\begin{eqnarray*}
\Pr[i\in T] & = & \sum_{j=1}^{i-1}\Pr[E_{ji}]=\sum_{j=1}^{i-1}\Pr[E_{ji}\suchthat j\in T]\Pr[j\in T]\\
 & = & \sum_{j=1}^{i-1}\frac{h_{i-1}-h_i}{h_j}\times\Pr[j\in T]=(h_{i-1}-h_i)\sum_{j=1}^{i-1}\frac{\Pr[j\in T]}{h_j}
\end{eqnarray*}

We will now inductively prove that for $i\geq2$, $\Pr[i\in T]=(h_{i-1}-h_i)/h_{i-1}$. This is evident for $i=2$, since $\Pr[1\in T]=1$. For an arbitrary $i>2$, we have
\begin{eqnarray*}
\Pr[i\in T] & = & (h_{i-1}-h_i)\sum_{j=1}^{i-1}\frac{\Pr[j\in T]}{h_j}=(h_{i-1}-h_i)\left[\frac{1}{h_1}+\sum_{j=2}^{i-1}\frac{h_{j-1}-h_j}{h_{j-1}h_j}\right]\\
 & = & (h_{i-1}-h_i)\left[\frac{1}{h_1}+\sum_{j=2}^{i-1}\left(\frac{1}{h_j}-\frac{1}{h_{j-1}}\right)\right]=\frac{h_{i-1}-h_i}{h_{i-1}}
\end{eqnarray*}

The expected size of $T$ is given by
\[\sum_{i=1}^r\Pr[i\in T]=1+\sum_{i=2}^r\frac{h_{i-1}-h_i}{h_{i-1}}\leq\sum_{j=1}^{h_1}\frac{1}{j}\]
Now, $h_1=|\mathcal{H}_1|$ and $\mathcal{H}_1$ is the collection of all minimum hitting sets of the $l$-uniform set system $\mathcal{S}_1$, with minimum hitting set size $s$. By Lemma \ref{lem_hittingset}, $h_1\leq l^s$. Hence, $\mathsf{E}[|T|]$ is $\mathcal{O}(s\log l)$. The cost incurred for every request which produced a fault is at most $s$, and hence, the total cost is $\mathcal{O}(s^2\log l)$ in the $s$'th sub-phase. Summing over $s$ from $1$ to $k$, we infer that the cost incurred in an entire phase is $\mathcal{O}(k^3\log l)$.
\end{proof}

While we have a $O(k^3\log l)$-competitive algorithm, we also have the following lower bound on the competitive ratio of any randomized online algorithm for $(k,l)$-MSSMS on uniform metric spaces.

\begin{theorem}\label{thm_rand_lb}
The competitive ratio of any randomized online algorithm for $(k,l)$-MSSMS against an oblivious adversary \footnote{An oblivious adversary is an adversary who does not have access to the random bits used by the algorithm.} is $\Omega(\log kl)$.
\end{theorem}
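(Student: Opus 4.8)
The plan is to invoke Yao's minimax principle: to lower bound the competitive ratio of every randomized algorithm against an oblivious adversary, it suffices to exhibit, for each of the two quantities $\log k$ and $\log l$, a probability distribution over request sequences on which every deterministic algorithm pays $\Omega(\log k)$ (resp.\ $\Omega(\log l)$) times the expected optimal cost. Since $\max(\log k,\log l)\ge\tfrac12\log(kl)$, establishing the two bounds separately yields the claimed $\Omega(\log kl)$. The single external ingredient I would borrow is the classical randomized lower bound for paging: with a cache of size $N$ and a universe of $N+1$ pages, no randomized algorithm is better than $H_N=\Omega(\log N)$-competitive, and this is itself proved by a Yao-style distribution.

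For the $\Omega(\log k)$ bound I would partition the uniform space into $k+1$ disjoint groups $G_1,\dots,G_{k+1}$, each consisting of $l$ points, and let every request be an entire group $G_i$ (a legal width-$l$ request). Serving such a request requires a server located somewhere in $G_i$, so if I identify ``group $G_i$ holds a server'' with ``page $i$ is cached'', the instance becomes paging with a cache of size $k$ over $k+1$ pages. Feeding it the paging hard distribution forces the server configuration to behave like a cache of size $k$, and yields $\Omega(\log k)$.

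For the $\Omega(\log l)$ bound I would instead use $k$ groups $G_1,\dots,G_k$, now of size $l+1$ each, and let each request be a group with one point deleted, $G_i\setminus\{p\}$. Within a single group holding exactly one server this is precisely the bijection ``occupied point $=$ the evicted page'': a request $G_i\setminus\{p\}$ demands that the server avoid $p$, i.e.\ that page $p$ be present, so a one-server group simulates paging with a cache of size $l$ over $l+1$ pages, giving $H_l=\Omega(\log l)$. Running the paging hard distribution independently in all $k$ groups (suitably interleaved) and summing the per-group costs reproduces $\Omega(\log l)$ overall.

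The main obstacle, in both constructions, is that a width-$l$ request does not pin a server to a prescribed point; it only forces the occupied set to meet an $l$-set. Hence I must rule out that an algorithm gains by reallocating servers across groups, or by parking two servers in one group, rather than maintaining the intended one-server-per-group (resp.\ one-server-per-cached-group) configuration. I would handle this by keeping every group ``active'' in the distribution, so that any group left without a server faults on essentially every request and is therefore strictly penalized; a short exchange argument then shows that a near-optimal response keeps servers localized, so that both the MSSMS optimum and the deterministic MSSMS cost coincide, up to lower-order additive terms, with their paging counterparts, at which point the paging lower bound transfers. Making these two sources of hardness clean, and ensuring the reduction's optimum matches the paging optimum, is the delicate part; the combinatorial paging bound itself is off-the-shelf.
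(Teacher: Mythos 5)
Your proposal is correct in outline, but it takes a genuinely different route from the paper, which also starts from Yao's principle but then uses a single, unified distribution rather than two paging reductions. In the paper, the metric space is uniform on exactly $k+l$ points, each request is a uniformly random $l$-set, and the sequence is stopped just before the last of the $kl$ ``coupons'' (the complements of the $k$-sets $B$ with $|B\setminus B_0|=1$, where $B_0$ is the initial configuration) would appear; the omitted set's complement then hits every request, so $\opt=1$ exactly. The crucial feature is configuration-obliviousness: whatever $k$-set a deterministic algorithm occupies, the current random request faults with probability exactly $1/\binom{k+l}{l}$, so the coupon-collector bound of roughly $\binom{k+l}{l}H(kl)$ expected requests immediately yields expected algorithm cost $\Omega(\log kl)$, with no need to reason about how the algorithm deploys its servers. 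Your route instead proves $\Omega(\log k)$ and $\Omega(\log l)$ separately by simulating paging inside groups and combines them via $\max(\log k,\log l)\ge\tfrac12\log(kl)$; this is modular (it reuses the off-the-shelf $H_N$ paging bound) but must confront exactly the issue you flag: a width-$l$ request only forces the occupied set to meet an $l$-set, so servers can be doubled up in one group and withdrawn from another. Reassuringly, this gap closes with a short averaging argument rather than a delicate exchange argument, provided the requested group is drawn uniformly at random (round-robin interleaving would be harder to control): in your second construction, if at some request $z$ groups are empty, $u$ hold one server, and $d$ hold at least two, then $d\le z$, and the per-request fault probability is at least $\frac{1}{k}\left(z+\frac{u}{l+1}\right)\ge\frac{1}{l+1}$; in the first construction it is at least $1/(k+1)$ since some group is always empty. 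Meanwhile $\opt$ is bounded above by the one-server-per-group strategy, whose cost is governed by the per-group paging optimum (one fault per coupon-collector phase). So both approaches give the same asymptotic bound; the paper's buys a cleaner, reallocation-free argument with $\opt=1$ exactly, while yours buys modularity at the price of the localization argument and a slightly more careful application of Yao's principle (the additive $O(k)$ initial-placement term in your bound on $\opt$ must be absorbed by taking the request sequence sufficiently long).
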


We use a form of Yao's principle \cite{Yao77} to prove the above theorem.
The required form of this principle is as follows \cite{BorodinE99,StougieV02}.

\begin{fact}[Yao's principle]\label{fact_yao}
Let $\mathfrak{A}$ denote the set of deterministic online algorithms for an online minimization problem. If $P$ is a distribution over input sequences such that for some real number $c\geq1$, $\inf_{\mathcal{A}\in\mathfrak{A}}\mathop{\mathbb{E}}_{\rho\sim P}\left[\mathcal{A}(\rho)\right]\geq c\cdot\mathop{\mathbb{E}}_{\rho\sim P}\left[\opt(\rho)\right]$, 
then $c$ is a lower bound on the competitive ratio of any randomized algorithm against an oblivious adversary.
\end{fact}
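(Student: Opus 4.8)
The plan is to invoke Yao's principle (Fact \ref{fact_yao}): I will exhibit distributions over request sequences on which every deterministic online algorithm is forced to pay $\Omega(\log k)$ and $\Omega(\log l)$ times the expected optimum, respectively, and then combine the two bounds. Since a single algorithm's competitive ratio $c$ must dominate the ratio witnessed by every distribution, these two families give $c=\Omega(\log k)$ and $c=\Omega(\log l)$ simultaneously, whence $c=\Omega(\log k+\log l)=\Omega(\log kl)$, using $\log(kl)=\log k+\log l$. Throughout I will rely on the classical randomized lower bound for paging: with a cache of size $m$ and $m+1$ distinct pages, the uniformly random request sequence forces every deterministic algorithm to an expected cost of $H_m=\Omega(\log m)$ times the expected optimum; this is itself a direct application of Fact \ref{fact_yao}, so it can be cited or reproved inline.

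For the $\Omega(\log k)$ bound I would embed cache-$k$ paging into genuine width-$l$ requests by a ``thickening'' trick. Partition the metric space into $k+1$ groups $G_0,\dots,G_k$, each consisting of $l$ points, and let every request be an entire group $G_p$ with $p$ chosen uniformly at random. Since the space is uniform and a request is a whole group, only the collection of groups currently containing a server matters: a server lies in $G_p$ (no fault) exactly when page $p$ is ``cached.'' With $k$ servers at most $k$ groups are occupied, so this is paging with cache at most $k$ on $k+1$ pages; crowding two servers into one group only raises the fault rate, so every algorithm pays at least the cache-$k$ paging cost, giving $\Omega(\log k)$ against the matching optimum.

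For the $\Omega(\log l)$ bound I would instead run $k$ parallel single-server instances. Take $k$ disjoint blocks $B_1,\dots,B_k$, each with $l+1$ points, and feed each block a round-robin stream of width-$l$ requests that forbid a uniformly random point of that block. A single server confined to a block, serving requests that forbid one of $l+1$ points, is exactly paging with a cache of size $l$ on $l+1$ pages, and therefore suffers $\Omega(\log l)$ overhead; the intended optimum dedicates one server per block, so that $\mathbb{E}[\opt]=\sum_j\mathbb{E}[\opt_j]$.

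The main obstacle is in the parallel construction: I must rule out that a deterministic algorithm gains by reallocating servers across blocks, since server interchangeability in a uniform space means nothing pins a server to its block. The crux is to show that the $k$ per-block paging games stay effectively independent and their costs additive, so that the $\Omega(\log l)$ per-block penalty survives. I expect to argue this by a pigeonhole/potential argument: with only $k$ servers for $k$ equally demanding blocks, any block left understaffed faults at least as often as a lone-server block would, so borrowing a server from block $j$ to assist block $i$ cannot push the total expected cost below $\Omega(\log l)\cdot\mathbb{E}[\opt]$. (The analogous issue in the thickening construction is benign, since the grouping is exact and crowding is manifestly wasteful.) Once additivity is established, applying Fact \ref{fact_yao} to each family and taking the larger of the two ratios yields the claimed $\Omega(\log kl)$.
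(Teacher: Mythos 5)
A preliminary remark: the quoted statement is Yao's principle, which the paper cites from the literature without proof; your attempt is in substance a proof of Theorem \ref{thm_rand_lb} (the $\Omega(\log kl)$ randomized lower bound), so I review it against the paper's proof of that theorem.

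Your route is genuinely different from the paper's. The paper uses a single construction: on the uniform space with $k+l$ points, let $\mathcal{S}$ be the $kl$ server configurations at distance $1$ from the initial one; request uniformly random $l$-sets until the complements of all sets in $\mathcal{S}$ have been drawn, withholding the last one $A$. Then $\opt(\rho)=1$ with certainty (the adversary sits on $\overline{A}$), every request faults with probability exactly $1/\binom{k+l}{l}$, and the coupon-collector bound on the sequence length gives expected algorithm cost $H(kl)-o(1)=\Omega(\log kl)$ in one stroke. You instead embed two classical paging lower bounds and combine them via $\log kl=\log k+\log l\le 2\max(\log k,\log l)$; the combination step is valid, and your first (``thickening'') construction is sound, since only group occupancy matters there and occupying fewer distinct groups only increases the fault probability.

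The gap is in the $\Omega(\log l)$ half, and your sketch somewhat misidentifies the difficulty. A block holding two servers at \emph{distinct} points of its $l+1$ points never faults at all --- every width-$l$ request inside the block contains at least one of them --- so overstaffing is a zero-fault shelter, not merely ``less demanding,'' and the per-block games are emphatically not independent. Worse, your schedule is deterministic round-robin, so a deterministic algorithm knows which block is hit next and can attempt just-in-time doubling; ruling this out requires an amortization charging each doubling its unit movement cost, which your pigeonhole sentence does not yet supply. The pigeonhole ingredient you name ($d$ doubled blocks force at least $d$ empty ones) is the right one, and there is a clean repair: request a uniformly random block at each step instead of round-robin. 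Then for any configuration with $e$ empty, $s$ singly-occupied and $d$ doubled blocks, the conditional expected cost of the next request is at least $\frac{1}{k}\bigl(e+\frac{s}{l+1}\bigr)\ge\frac{1}{l+1}$ pointwise (using $e\ge d$ and $s\le k-2d$, this reduces to $d(l-1)\ge0$), so the algorithm pays $\Omega(N/(l+1))$ on any $N$-request prefix, while the offline optimum is $O(N/((l+1)H_{l+1}))$ by the standard paging analysis, and Fact \ref{fact_yao} applies. Note also that $\mathbb{E}[\opt]$ is at most --- not equal to, as you wrote --- the sum of the per-block optima, since the global optimum may itself reallocate; fortunately that inequality points in the direction your bound needs. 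With these repairs your two-construction argument is correct, at the cost of more machinery than the paper's single coupon-collector construction, which additionally pins down the constant ($H(kl)$) and makes $\opt$ deterministic.
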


Here $\opt(\rho)$ denotes the optimum cost of serving the sequence $\rho$.

\begin{proof}[Proof of Theorem \ref{thm_rand_lb}]
Let $\mathcal{M}$ be the uniform metric space with $k+l$ points, and let $B_0$ be the set of points occupied by the servers initially. Let $\mathcal{S}$ be the collection of all sets of points $B$ such that $|B|=k$ and $|B\setminus B_0|=1$. The distribution $P$, required to use Fact \ref{fact_yao}, is given by the following random process. We generate sets of size $l$ uniformly at random and request them, until the complements of all sets in $\mathcal{S}$ have been picked. We do not request the last such set picked, say $A$. Note that $\overline{A}\in\mathcal{S}$, and $A$ does not appear in the input. Let $\rho$ denote this random input.

The adversary can serve the request sequence $\rho$, simply by covering the points in $\overline{A}$. Since the metric space has only $k+l$ points, $\overline{A}$ intersects with all the sets that appear in $\rho$. Further, since $\overline{A}\in\mathcal{S}$, $|\overline{A}\setminus B_0|=1$. Thus 
$\opt(\rho)=1$.

The algorithm incurs a cost exactly when the request is the complement of the set of positions occupied by its servers. This happens with probability $1/{{k+l}\choose{l}}$, and hence the expected cost incurred by the algorithm for any request is $1/{{k+l}\choose{l}}$. Let $X_i$ be the random variable representing the cost on the $i^{\text{\tiny{th}}}$ request, and $Y$ be the random variable representing the number of requests in $\rho$. Then $\mathop{\mathbb{E}}_{\rho\sim P}\left[X_i|Y\geq i\right]=1/{{k+l}\choose{l}}$ and $\mathop{\mathbb{E}}_{\rho\sim P}\left[X_i|Y<i\right]=0$. Thus, $\mathop{\mathbb{E}}_{\rho\sim P}\left[X_i\right]=1/{{k+l}\choose{l}}\cdot\Pr_{\rho\sim P}\left[Y\geq i\right]$, and therefore, the expected total cost is given by $\sum_{i=1}^{\infty}\mathop{\mathbb{E}}_{\rho\sim P}\left[X_i\right]=\left(\sum_{i=1}^{\infty}\Pr_{\rho\sim P}\left[Y\geq i\right]\right)/{{k+l}\choose{l}}=\mathop{\mathbb{E}}_{\rho\sim P}\left[Y\right]/{{k+l}\choose{l}}$. 
By the coupon collector argument, the expected number of requests in $\rho$ is ${{k+l}\choose{l}}\cdot H(|\mathcal{S}|)-1={{k+l}\choose{l}}\cdot H(kl)-1$. Thus the expected cost incurred by the algorithm is $\left({{k+l}\choose{l}}\cdot H(kl)-1\right)/{{k+l}\choose{l}}=\Omega(\log kl)$.
\end{proof}

\section{General metric spaces: Lower bounds}\label{sec_general_lb}

In this section we prove two lower bounds on the competitive ratio of any algorithm for $(k,l)$-MSSMS. Both of these are strictly better than Feuerstein's lower bound \cite{Feuerstein98} of ${{k+l}\choose{l}}-1$, for all $k>1$, $l>1$. The basic idea of both proofs, is that the adversary runs many algorithms in parallel with the online algorithm, in such a way that the total cost incurred by these algorithms, is at most a constant times the cost incurred by the online algorithm. This idea is due to Manasse et. al. \cite{ManasseMS88}. For the first lower bound, we give a metric space with two distances, on which an adversary can force any algorithm to perform poorly.

\begin{theorem}\label{thm_lower_bound_online}
There is a metric space with two distances, on which the competitive ratio of any deterministic online algorithm for the $(k,l)$-MSSMS problem is more than ${{k+2l-1}\choose{k}}-{{k+l-1}\choose{k}}$.
\end{theorem}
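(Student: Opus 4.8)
The plan is to adapt the adversary construction that Manasse, McGeoch, and Sloate used for their $k$-server lower bound to the multi-server, multi-point setting. The central idea, as the introduction flags, is to run a family of auxiliary (offline) algorithms in parallel against the online algorithm and charge the online cost against the total cost of the family, so that some member of the family performs at least a factor of roughly $\binom{k+2l-1}{k}-\binom{k+l-1}{k}$ better. First I would fix a metric space with two distances: take a ground set of $k+2l-1$ points (or thereabouts) where every pair of distinct points is at some distance $1$, and then add a second, smaller distance to create the fine structure the argument needs, mirroring how the uniform lower bound of $\binom{k+l}{k}-1$ used $k+l$ points. The extra $l-1$ points beyond the $k+l$ of Feuerstein's bound are what should push the ratio up from $\binom{k+l}{k}-1$ to the stated difference of binomial coefficients.

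Next I would describe the adversary's request strategy. At each step the online algorithm occupies some configuration $B$ of $k$ points; the adversary requests an $l$-subset $A$ that is ``hard'' for $B$ — typically one disjoint from $B$, forcing a fault. The key bookkeeping is to enumerate the relevant configurations that the auxiliary algorithms collectively maintain. I expect the two binomial coefficients to arise as counts of multisets or configurations of servers on the ground set: $\binom{k+l-1}{k}$ counts the number of size-$k$ multisets (or configurations) on $l$ points, and $\binom{k+2l-1}{k}$ the number on $2l$ points, so their difference counts configurations that use at least one of the $l$ ``new'' points. I would set up a potential or averaging argument showing that each online fault is matched against a provably small increase in the summed cost of the auxiliary algorithms, and then divide.

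The heart of the argument is the averaging step: I would run one auxiliary algorithm for each relevant configuration, have each serve the whole sequence offline-optimally subject to its constraints, and show (i) the online algorithm pays at least $1$ per phase while (ii) the total auxiliary cost over a phase is bounded so that the \emph{average} auxiliary cost per phase is at most $1/\left(\binom{k+2l-1}{k}-\binom{k+l-1}{k}\right)$ times the online cost. Concatenating over arbitrarily many phases makes the additive start-up terms negligible and yields a ratio exceeding the claimed bound. To get strict inequality (``more than''), I would arrange that in each phase the online algorithm is forced to fault while at least one auxiliary algorithm avoids faulting entirely, so the inequality is never tight.

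The main obstacle will be the combinatorial heart of step two: correctly defining the auxiliary configurations and proving the per-phase charging bound so that the counts yield exactly $\binom{k+2l-1}{k}-\binom{k+l-1}{k}$ rather than some weaker quantity. In particular, the argument must exploit the second (smaller) distance to ensure that auxiliary algorithms can cheaply ``repair'' their configurations between phases — otherwise the bound degrades to the uniform case. I would expect the delicate part to be verifying that the number of configurations an adversary can force the online algorithm to visit, minus those it can avoid, equals the stated difference; getting the indexing of multisets on $l$ versus $2l$ points exactly right, and confirming the disjointness/hitting conditions that make the skew-Bollob\'as-style counting apply, is where the real work lies.
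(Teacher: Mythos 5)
Your high-level template --- run a family of auxiliary algorithms in parallel against the online algorithm and average, in the style of Manasse et al. --- is indeed the strategy the paper uses, and your reading of the two binomial coefficients as counts of size-$k$ multisets on $l$ versus $2l$ symbols is a valid identity. But the proposal stops exactly where the proof starts, and the one concrete design choice you do commit to --- a ground set of roughly $k+2l-1$ points with a second, smaller distance providing ``fine structure'' --- is not the right space. The paper's space is much larger: $M=[l]\times[k+1]$, i.e.\ $l$ \emph{clusters} of $k+1$ points each, with distance $1$ inside a cluster and a huge distance $D$ between clusters. The cluster size $k+1$ is what lets the adversary request, at every step, the lowest-numbered point of each cluster not occupied by the online algorithm's servers, so that \emph{every} request faults; and the large inter-cluster distance is what creates the phase structure, since a phase ends precisely when the online algorithm moves a server between clusters and pays $D$. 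Your proposal has the roles of the two distances essentially inverted or left vague: the small distance is not request-side fine structure, it is the cheap repair distance for the auxiliary algorithms, which never cross clusters at all (and $D$ must be chosen larger than $l$ times the number of auxiliary algorithms so that phase changes favor the adversary).

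The second missing piece is what the auxiliary algorithms are and why there are exactly $\binom{k+2l-1}{k}-\binom{k+l-1}{k}$ of them. In the paper, each auxiliary algorithm commits permanently to a \emph{phase type} $\kappa=(k_1,\ldots,k_l)$, a distribution of its $k$ servers among the clusters, and for each type $\kappa$ the adversary runs $\prod_{i}(k_i+1)-1$ algorithms. During phases of a different type these algorithms sit at canonical positions $(i,1),\ldots,(i,k_i)$ and serve every request at zero cost, because they strictly outnumber the online algorithm in some cluster; during a phase of type $\kappa$ they occupy, inside $\bigcup_i\{(i,j)\suchthat j\le k_i+1\}$, all $\prod_i(k_i+1)-1$ configurations other than the online algorithm's, so each request (the complement of the online configuration in that set) is free for them, and each unit move by the online algorithm is matched one-for-one by a single unit ``reverse move'' of the one auxiliary algorithm it collides with. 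The claimed bound is then the total number of auxiliary algorithms, via $\sum_{k_1+\cdots+k_l=k}\prod_i(k_i+1)=\binom{k+2l-1}{k}$ and $\sum_{k_1+\cdots+k_l=k}1=\binom{k+l-1}{k}$. Note also that no skew-Bollob\'as-type counting enters anywhere here; that tool belongs to the upper-bound analysis of the Hitting Set algorithm, so invoking it is a red herring. Since your writeup explicitly defers the space, the per-type invariant, and the charging argument as ``where the real work lies,'' it is a plan rather than a proof, and its one committed choice (the small ground set) would have to be discarded to carry it out.
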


\begin{proof}
The required metric space is given by the set of points $M=[l]\times[k+1]$, where we will call the set of points $\{i\}\times[k+1]$ the $i$'th cluster $M_i$. A pair of distinct points is separated by a distance $1$ if they belong to the same cluster, and a large distance $D$ otherwise. We will determine $D$ later. Assume that initially the servers occupy the points $(1,1),\ldots,(1,k)$.
\footnote{This is for convenience only. The proof works with any choice of initial positions.}

Fix an online algorithm. The adversary forces the algorithm to perform poorly, as follows. At any time, the adversary picks, from each of the $l$ clusters, the lowest numbered point which is not occupied by the algorithm's servers. It gives this set of points as the next request. Note that this is possible because each cluster has $k+1$ points but the algorithm has only $k$ servers.
Since none of the points in this request set are occupied by the algorithm's servers, the algorithm incurs a nonzero cost to serve each request. We may assume without loss of generality that the algorithm is \textit{lazy}, that is, it moves its servers only to serve requests, and thus, moves exactly one server for every request.\footnote{This is a standard argument; one can postpone the movements of a server to the instant when it gets a chance to serve a request, and this does not increase the distance traveled by the server.} For the purpose of analysis, we divide the input into phases, based on the behaviour of the algorithm to serve the input. 
A phase ends whenever the algorithm moves one of its servers from one cluster to another, and a new phase begins from the next request.

By definition of a phase, the number of the algorithm's servers in each of the $l$ clusters does not change during a phase. We say that a phase is of type $(k_1,\ldots,k_l)$ if the algorithm has $k_i$ servers inside the cluster $M_i$, during that phase. Thus, the zeroth phase is of type $(k,0,\ldots,0)$. Suppose the requests are given to the algorithm ad infinitum. We claim that if the algorithm has a finite competitive ratio, there must be infinitely many phases of each type. Suppose this is not the case. Then there exists a phase type $\kappa=(k_1,\ldots,k_l)$ such that there are only finitely many phases of type $\kappa$ and each of these phases is finite. The adversary serves the requests arbitrarily until the last phase of type $\kappa$ is over. Once the last phase of type $\kappa$ is over, the adversary places $k_i$ servers inside $M_i$ at $(i,1),\ldots,(i,k_i)$ for each $i$ permanently. Henceforward, the adversary always has more servers than the online algorithm, in some cluster. It is then able to serve all the future requests at zero cost, while the algorithm pays for each request. Thus the algorithm cannot have a finite competitive ratio. In particular, this claim implies that the algorithm must change phases infinitely often.

The adversary runs many algorithms in parallel with the online algorithm. We will denote this number by $h(k,l)$ and calculate it later. For each phase type $\kappa=(k_1,\ldots,k_l)$, the adversary runs $g(\kappa)=\prod_{i=1}^l(k_i+1)-1$ algorithms, each of which has $k_i$ servers in $M_i$, for each $i$, permanently. These algorithms maintain the following invariant. During a phase of type $\kappa'\neq\kappa$, all these $g(\kappa)$ algorithms place their $k_i$ servers on the points $(i,1),\ldots,(i,k_i)\in M_i$, for each $i$. As before, these algorithms have more servers than the online algorithm, in some cluster. Thus they serve the requests at zero cost.

On the other hand, during a phase of type $\kappa$, consider the sets of points $M'_i=\{(i,j)\suchthat j\leq k_i+1\}$, for $i=1$ to $l$. There are $\prod_{i=1}^l(k_i+1)$ ways of placing $k_1$ servers at distinct points in $M'_1$, $k_2$ servers at distinct points in $M'_2$, $\ldots$, $k_l$ servers at distinct points in $M'_l$. Since the request sets in the phase are subsets of $\bigcup_{i=1}^lM'_i$, the set of points $S$ occupied by the online algorithm's servers corresponds to one of the $\prod_{i=1}^l(k_i+1)$ ways. The adversary maintains the invariant, that the configurations of its $\prod_{i=1}^l(k_i+1)-1$ algorithms of type $\kappa$ correspond to each of the other ways. Observe that the next request is $\left(\bigcup_{i=1}^lM'_i\right)\setminus S$ and thus, each of the adversary's type $\kappa$ algorithms serves the request at $0$ cost. However, whenever the online algorithm moves and incurs a cost 1, its configuration becomes equal to that of exactly one of the adversary's type $\kappa$ algorithms. In order to maintain the invariant, this type $\kappa$ algorithm executes a reverse move to attain the configuration $S$, and incurs a cost $1$.

Thus, for the requests which do not result in the beginning of a new phase, the cost incurred by the online algorithm is equal to the total cost incurred by all the adversary's algorithms. Now consider a request which results in a phase change from $\kappa$ to $\kappa'$. The online algorithm incurs a cost $D$. Further, each of the $g(\kappa)+g(\kappa')$ algorithms of the adversary, corresponding to the phase types $\kappa$ and $\kappa'$, has to shift at most $1$ server within every cluster to maintain the invariant. Note that none of the adversary's algorithms ever shifts its servers across clusters. Thus the total cost incurred by the adversary's algorithms is $l\cdot(g(\kappa)+g(\kappa'))\leq l\cdot h(k,l)$. We choose $D>l\cdot h(k,l)$, so that whenever a new phase begins, the adversary pays less than the online algorithm.

Suppose the adversary stops giving requests after $r$ phases, and this results in $m$ requests. The cost incurred by the online algorithm is $rD$ for the requests resulting in a phase change, and $m-r$ for the requests not resulting in a phase change. Hence the cost incurred by the online algorithm over the entire request sequence is $rD+m-r$.

Let us estimate the total cost of the adversary's $h(k,l)$ algorithms. Each of the adversary's algorithms incurs a cost of at most $kD$ to establish the invariant initially. Later, for the $m-r$ requests not resulting in a phase change, the total cost of these algorithms is $m-r$ and for the $r$ requests resulting in a phase change, the total cost of these algorithms is at most $rl\cdot h(k,l)$. Thus the overall cost of all the adversary's algorithms, over all requests, is at most $kD\cdot h(k,l)+rl\cdot h(k,l)+m-r$.

We choose $r>kD\cdot h(k,l)/\left(D-l\cdot h(k,l)\right)$ so that $rD>kD\cdot h(k,l)+rl\cdot h(k,l)$
and thus the cost of the online algorithm is no less than the total cost of the adversary's algorithms. Further, since the adversary runs $h(k,l)$ algorithms, one of these must incur a cost less than $1/h(k,l)$ times the cost of the online algorithm. This proves a lower bound of $h(k,l)$ on the competitive ratio of any online algorithm for $(k,l)$-MSSMS.

Finally, let us calculate $h(k,l)$, the total number of algorithms run by the adversary. This number is given by $\sum\left[\prod_{i=1}^k(k_i+1)-1\right]=\left[\sum\prod_{i=1}^k(k_i+1)\right]-\left[\sum1\right]$, where all the summations are over $(k_1,\ldots,k_l)$ such that each $k_i\geq0$ and $k_1+\cdots+k_l=k$.
Observe that the first term on the right hand side is the number of ways of arranging $2l-1$ identical red balls and $k$ identical green balls on a line, where $k_i$ is the number of green balls between the $(2i-2)^{\text{\tiny{nd}}}$ and the ${2i}^{\text{\tiny{th}}}$ red balls from the left. Thus, this term is equal to ${{k+2l-1}\choose{k}}$. The term $\sum_{(k_1,\ldots,k_l):k_1+\cdots+k_l=k}1$ can be easily seen to be equal to ${k+l-1}\choose{k}$. Thus the number of algorithms run by the adversary is $h(k,l)={{k+2l-1}\choose{k}}-{{k+l-1}\choose{k}}$. \qed
\end{proof}

Now for the second lower bound, we first consider a ``weighted'' variant of MSSMS, where we have some $t\leq k,l$, and $k_1,\ldots,k_t>0$ such that $k_1+\cdots+k_t=k$. The underlying space is the uniform metric space with $k+\lfloor l/t\rfloor$ points. (Henceforward we drop the floor symbols and simply write $l/t$ for $\lfloor l/t\rfloor$.) The servers are put into $t$ groups of size $k_1,\ldots,k_t$ and the servers in the $i^{\text{\tiny{th}}}$ group charge the algorithm at rate $\beta_i$ per unit displacement. Each request is a set of $l/t$ points; specifically, the set of points not occupied by the servers. It can be shown, by a proof analogous to the proof of Theorem 3.2 in \cite{FiatR94}, that there exist $\beta_1,\ldots,\beta_t$ such that any online algorithm for this problem must have competitive ratio at least ${{k+l/t}\choose{k_1,\ldots,k_t,l/t}}/2$. But this problem is essentially $(k,l)$-MSSMS on the metric space consisting of $t$ subspaces separated by infinite distance, where the $i^{\text{\tiny{th}}}$ subspace is the uniform space on $k+l/t$ points scaled by the factor $\beta_i$ and it contains $k_i$ servers, and each request contains $l/t$ points from each of the subspaces. Thus for every $t\leq k,l$, and $k_1,\ldots,k_t>0$ such that $\sum_j k_j=k$, ${{k+l/t}\choose{k_1,\ldots,k_t,l/t}}/2$ is a lower bound on the competitive ratio of any online algorithm for $(k,l)$-MSSMS.

\section{The Offline problem}\label{sec_offline}

We elaborate on the offline $(k,l)$-MSSMS problem in this section. Before that, we briefly describe the offline algorithms for the $k$-server problem and Metrical Service Systems.

The problem of finding the optimal solution to an instance of $k$-server problem can be reduced to the problem of finding a min-cost flow on a suitably constructed directed graph \cite{ChrobakKPV91}, and hence, the offline $k$-server problem can be solved in polynomial time. Note that for any instance of min-cost flow, there exists a solution in which all the flows are integral, and which is no worse than any fractional solution. We will use this fact later. The offline MSS problem can be translated to finding the shortest source-to-sink path in a suitably constructed directed graph of size linear in the size of the instance. Thus, MSS too can be solved in polynomial time. In fact, $(k,l)$-MSSMS can be solved in polynomial time for any constant $k$. However, the problem is NP-hard for any fixed $l\geq2$, when $k$ is allowed to vary.

In the subsequent subsections, we give a natural Integer Linear Program (ILP) for the offline $(k,l)$-MSSMS problem. Although it has an unbounded integrality gap, we use it to design a pseudo-approximation algorithm which uses $kl$ servers, and incurs a cost of at most $l$ times the optimum. We then prove a lower bound which suggests that nothing better than our pseudo-approximation can be achieved in polynomial time. This result has implications in the online setting also. Feuerstein \cite{Feuerstein98} gave a polynomial time online algorithm for MSSMS on uniform spaces which uses $kl$ servers, and achieves a competitive ratio of $kl$ against an adversary using $k$ servers, where $l$, as before, is the size of each request set. Our result implies that it is unlikely that a polynomial time competitive algorithm using less than $kl$ servers exists. 

\subsection{ILP Formulation and Pseudo-approximation algorithm}

For the metric space $(M,d)$, let $S=\{s^1,\ldots,s^k\}$ be the set of initial positions of the $k$ servers and let the request sequence be $\rho=(R_1,\ldots,R_m)$ where $R_i\subseteq M$ and $R_i=\{r_i^1,\ldots,r_i^l\}$. A natural integer linear program is as follows. For each $1\leq i<i'\leq m$ and $1\leq j,j'\leq l$, we have a variable $f(i,j,i',j')$, which is $1$ if some server was present at $r_i^j$ to serve $R_i$, and that server was next made to shift to $r_{i'}^{j'}$ in order to serve $R_{i'}$; and $0$ otherwise. For each $1\leq k'\leq k$, $1\leq i\leq m$, $1\leq j\leq l$, we have a variable $g(k',i,j)$ which is $1$ if the $k'$'th server was first shifted to $r_i^j$, to serve $R_i$, and $0$ otherwise. The objective and the constraints are as given in Figure \ref{fig_ilp}.

\begin{figure}[h]
\begin{eqnarray*}
\text{Minimize} & & \sum_{k',i,j}d(s^{k'},r_i^j)g(k',i,j)+\sum_{i,j,i',j'}d(r_i^j,r_{i'}^{j'})f(i,j,i',j')\\
\sum_{i,j}g(k',i,j) & \leq & 1\text{ for each }1\leq k'\leq k\\
\sum_{k'}g(k',i,j)+\sum_{i'<i,j'}f(i',j',i,j) & \geq & \sum_{i''>i,j''}f(i,j,i'',j'')\text{ for each }1\leq i\leq m\text{, }1\leq j\leq l\\
\sum_{k',j}g(k',i,j)+\sum_{i'<i,j',j}f(i',j',i,j) & \geq & 1\text{ for each }1\leq i\leq m\\
f(i,j,i',j')\in\{0,1\}\text{ for all }i,j,i',j' & & g(k',i,j)\in\{0,1\}\text{ for all }k',i,j
\end{eqnarray*}
\caption{An ILP formulation of the offline $(k,l)$-MSSMS problem}\label{fig_ilp}
\end{figure}

To relax the ILP to an LP, we replace the constraints $f(i,j,i',j')\in\{0,1\}$ and $g(k',i,j)\in\{0,1\}$ by the constraints $0\leq f(i,j,i',j')\leq1$ and $0\leq g(k',i,j)\leq1$ respectively. Unfortunately, this relaxation has an unbounded integrality gap.

\begin{claim}
The LP relaxation of the ILP given in Figure \ref{fig_ilp} has unbounded integrality gap when $k,l>1$.
\end{claim}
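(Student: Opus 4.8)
The plan is to exhibit a family of instances, indexed by the number of requests $m$, on which the integral optimum grows linearly in $m$ while the optimal LP value stays bounded by a constant depending only on $k$. Take the uniform metric on $n=k+l$ points, place the $k$ servers arbitrarily, and let the request sequence be the cyclic repetition, for $m$ steps, of all $\binom{k+l}{l}$ subsets of size $l$. The integral lower bound is then immediate: in any window consisting of one full cycle, every $k$-point configuration $C$ has its complement $\overline{C}$ (an $l$-set) requested, and $C\cap\overline{C}=\emptyset$, so the servers cannot remain in a single configuration throughout a cycle without leaving some request unserved. Hence at least one server move occurs per cycle, each costing $1$ in the uniform metric, and the integral optimum is at least $\lfloor m/\binom{k+l}{l}\rfloor=\Omega(m)$.

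For the upper bound on the LP value I would use the stationary uniform fractional distribution that keeps mass $k/(k+l)$ on every point. Since each request has $l$ points, the total fractional mass it receives is $lk/(k+l)$, which is at least $1$ exactly when $(k-1)(l-1)\ge1$, i.e. for all $k,l>1$, so the coverage constraints are satisfied. The crucial observation, and the step I expect to be the main technical point, is that this distribution can be maintained at zero ongoing cost: the variable $f(i,j,i',j')$ is allowed for \emph{any} $i<i'$, not merely for consecutive requests, so the mass sitting at a physical point $p$ can be carried from the occurrence of $p$ in one request directly to its next occurrence by an $f$-edge of cost $d(p,p)=0$, skipping over all intervening requests that omit $p$. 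Thus only the initial placement is charged: injecting $k/(k+l)$ units to each point through the $g$-variables (splitting each injection across the $k$ servers so that $\sum_{i,j}g(k',i,j)\le1$ holds) costs at most $k$, independently of $m$.

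What remains is to check that this assignment is genuinely LP-feasible -- that flow conservation holds at every node (inflow $k/(k+l)$ equals the outflow sent to the next occurrence, or exceeds the zero outflow at a last occurrence), that the coverage and injection constraints hold, and that every $f,g$ value lies in $[0,1]$ -- and then to assemble the two bounds. Combining them yields an integrality gap of at least $\Omega(m)/\mathcal{O}(k)$, which tends to infinity with $m$, proving the gap is unbounded whenever $k,l>1$. The main obstacle is purely the LP-feasibility bookkeeping of the skip-edge construction and the careful argument that the fractional cost does not accumulate over the $m$ requests; the combinatorial per-cycle argument for the integral optimum is comparatively routine.
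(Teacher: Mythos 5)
Your proposal is correct, and it follows the same overall strategy as the paper's proof: repeat the enumeration of all $l$-element subsets many times, exhibit a fractional solution that pays a one-time cost of at most $k$ by spreading server mass uniformly and carrying it forward along zero-cost $f$-edges between consecutive occurrences of the same physical point, and show that any integral solution must pay per repetition. The differences lie in the instance and in where the hypothesis $k,l>1$ does its work. The paper places the requests inside a separate set $M$ of $kl$ points (disjoint from the $k$ initial positions), spreads mass $1/l$ on each point of $M$ so that every request is covered with total mass exactly $1$, and obtains the integral bound from a hitting-set count: the points visited during one repetition must hit all $l$-subsets of $M$, hence number at least $kl-l+1$, giving cost at least $(k-1)(l-1)$ per repetition --- this is where $k,l>1$ is needed. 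Your instance is smaller, the uniform space on $k+l$ points with requests ranging over the whole space; your integral bound is the more elementary pigeonhole observation that a stationary configuration leaves some requested $l$-set uncovered, so one move per cycle is forced; and the hypothesis $k,l>1$ is instead consumed on the fractional side, in the inequality $lk/(k+l)\geq 1\iff(k-1)(l-1)\geq 1$ that makes your uniform mass assignment cover each request. Both yield an unbounded gap; your write-up also makes explicit the skip-edge flow bookkeeping that the paper leaves implicit in the phrase ``all requests can now be served at zero cost,'' which is a genuine point of care since the LP's coverage constraint is stated in terms of inflow from $g$-variables and earlier $f$-edges.
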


\begin{proof}
Consider the uniform metric space on a set of points $M_0\uplus M$, where $M_0$ is the set of $k$ points occupied by the servers initially, and $|M|=kl$.
Let $R_1,\ldots,R_{{kl}\choose{l}}$ be an arbitrary ordering of the $l$-size subsets of $M$. The request sequence is constructed by repeating this ordering $m$ times. This request sequence can be fractionally served as follows. Initially, a $1/l$ fraction of servers shifts to each of the $kl$ points in $M$, and this costs $k$. Having done this, all requests can now be served at zero cost. Thus, the optimum of the LP relaxation is at most $k$. 

However, an integer solution must have cost at least $(k-1)(l-1)$ on the sequence $R_1,\ldots,R_{{kl}\choose{l}}$, starting with any server configuration. This is justified as follows. 
Given an initial server configuration, let $U$ be the set of points visited to serve the sequence $R_1,\ldots,R_{{kl}\choose{l}}$, including the points occupied initially. Then the cost paid is at least $|U|-k$, and $U$ is a hitting set of $R_1,\ldots,R_{{kl}\choose{l}}$. Since the sequence $R_1,\ldots,R_{{kl}\choose{l}}$ contains all $l$-subsets of a $kl$-sized set, any hitting set of this sequence must have size at least $kl-l+1$. Thus, $|U|\geq kl-l+1$, and the cost of serving the sequence is at least $kl-l+1-k=(k-1)(l-1)$.

Therefore, the total cost of the ILP is at least $m(k-1)(l-1)$. Since $m$ can be chosen to be arbitrarily large, the ILP has infinite integrality gap when both $k$ and $l$ are more than $1$. \qed
\end{proof}

We can, however, round the solution of the LP relaxation to get a pseudo-approximation algorithm for MSSMS.

\begin{theorem}
There is a polynomial time algorithm, which computes a feasible solution to a given instance of $(k,l)$-MSSMS using $kl$ servers instead of $k$, and such that the cost of the solution is at most $l$ times the cost of the optimum (fractional) solution of the LP relaxation of the ILP. 
\end{theorem}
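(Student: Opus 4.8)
The plan is to read the LP solution as a fractional flow, repair the single feature that destroys integrality, and then invoke the integrality of min-cost flow. I would interpret the variables $g(k',i,j)$ and $f(i,j,i',j')$ as a flow on the acyclic network whose sources are the initial positions $s^{1},\dots,s^{k}$, whose internal nodes are the request points $(i,j)$, and whose arcs are the dispatch arcs (the $g$'s) and the forward arcs (the $f$'s, which run only from a request $i$ to a later request $i'$). Writing $a_{ij}=\sum_{k'}g(k',i,j)+\sum_{i'<i,j'}f(i',j',i,j)$ for the flow entering $(i,j)$, the first ILP constraint caps the supply leaving each source by $1$, the second is flow conservation with leakage, $a_{ij}\ge\sum_{i''>i,j''}f(i,j,i'',j'')$ (so a server may stop), and the third is the coverage requirement $\sum_{j=1}^{l}a_{ij}\ge1$. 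Dropping the coverage constraints leaves a totally unimodular min-cost flow; the coverage constraints are precisely the coupling side-constraints responsible for the integrality gap of the preceding claim, so the whole task is to meet them integrally while paying only a factor $l$.

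The key step combines averaging with scaling. For each request $i$, since $a_{i1},\dots,a_{il}$ are nonnegative and sum to at least $1$, some designated point $j^{*}(i)$ satisfies $a_{i,j^{*}(i)}\ge 1/l$. I would then scale the entire fractional flow by $l$: this multiplies the cost by $l$, raises the per-source supply cap from $1$ to $l$ (so the total supply becomes $kl$), and makes the inflow into every designated point at least $1$. Next I define a new flow problem $\mathcal{F}'$ on the same network, with total supply $kl$ (equivalently, $l$ servers stationed at each original position $s^{k'}$), the same conservation-with-leakage structure, the same linear cost, and — crucially — the coupling coverage constraints replaced by the single-arc lower bounds ``the inflow into $(i,j^{*}(i))$ is at least $1$'' for each request $i$. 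After the standard node-splitting that converts a lower bound on the flow through a node into a lower bound on one arc, $\mathcal{F}'$ is a genuine min-cost flow with integral data, hence totally unimodular and integral.

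Assembling the pieces, the scaled fractional flow is feasible for $\mathcal{F}'$ and has cost exactly $l$ times the LP optimum, so the integral optimum of $\mathcal{F}'$ is no larger. I would compute this integral optimum in polynomial time and decompose it into at most $kl$ unit source-to-leak paths; each such path is the trajectory of one server, starting at some $s^{k'}$ and visiting request points in increasing order of request index. Since every request $i$ receives at least one unit of flow at its designated point $(i,j^{*}(i))$, each request is served, and we obtain a feasible $(k,l)$-MSSMS schedule that uses $kl$ servers and has cost at most $l\cdot\opt$, where $\opt$ is the optimum of the LP relaxation.

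I expect the main obstacle to be the verification that $\mathcal{F}'$ is a bona fide integral min-cost flow. One must check that replacing the $l$-fold coupling constraints by per-designated-point arc lower bounds genuinely restores total unimodularity: the leakage direction $a_{ij}\ge\sum_{i''>i,j''}f(i,j,i'',j'')$ and the per-source supply caps have to be encoded as ordinary arcs to a global sink and from a global source, and the lower bounds handled by node-splitting. The second thing to argue carefully is that this single modification simultaneously yields both the factor $l$ and the blow-up to $kl$ servers; both trace back to the one averaging inequality $a_{i,j^{*}(i)}\ge 1/l$ together with the $l$-fold scaling, so the bookkeeping must confirm that scaling by $l$ is exactly what is needed to lift the designated-point inflow to $1$ while keeping the total supply at $kl$.
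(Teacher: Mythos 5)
Your proposal is correct and follows essentially the same route as the paper's proof: scale the optimal LP solution by $l$, use averaging on the coverage constraint to designate one point per request whose (scaled) inflow is at least $1$, and then invoke the integrality of min-cost flow to extract an integral schedule for $kl$ servers (that is, $l$ servers at each initial position) of cost at most $l$ times the LP optimum. The only difference is presentational: the paper packages the final step as optimally solving the $kl$-server instance $r^*_1,\ldots,r^*_m$, which is itself solved by exactly the integral min-cost flow with unit lower bounds at the designated points that you construct explicitly.
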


\begin{proof}
Given an instance of $(k,l)$-MSSMS with $S$, the set of initial server positions, let 
the vector $(f^*,g^*)$ be $l$ times the optimum (fractional) solution of the LP relaxation. Then we have $\sum_{k',j}g^*(k',i,j)+\sum_{i'<i,j',j}f^*(i',j',i,j)\geq l$ for each $i$.
Thus, for each $i$, there exists some $j_i$ such that $\sum_{k'}g^*(k',i,j_i)+\sum_{i'<i,j'}f^*(i',j',i,j_i)\geq1$. Let $r^*_i=r_i^{j_i}$. The algorithm solves the LP relaxation and finds $r^*_1,\ldots,r^*_m$. It then treats $r^*_1,\ldots,r^*_m$ as an instance of the $kl$-server problem, and assuming $l$ servers to be initially located at each point in $S$, computes the optimum solution. Note that this is a feasible solution to the given instance of $(k,l)$-MSSMS, except that it uses $kl$ servers instead of $k$. To analyze its cost, observe that the vector $(f^*,g^*)$ gives a (fractional) solution to the instance $r^*_1,\ldots,r^*_m$ of the $kl$-server problem. Hence the cost of the optimum solution to this instance is no more then the cost of $(f^*,g^*)$, which is $l$ times the cost of the optimum fractional solution to the LP relaxation. 
Thus, the solution returned by the algorithm 
has cost at most $l$ times that of the optimum of the LP relaxation. \qed
\end{proof}

\subsection{Hardness of Pseudo-Approximation}

The hardness result, that we prove next, essentially implies that nothing better than what the pseudo-approximation algorithm does, can be achieved in polynomial time. The proof involves a reduction from the problem of finding an approximate vertex cover of a given uniform hypergraph. The following results are known about the hardness of approximating hypergraph vertex cover.

\begin{fact}[Khot and Regev, \cite{KhotR08}]\label{fact_hard1}
Assuming the Unique Games Conjecture, (UGC) \cite{Khot_STOC02_UGC}, it is NP-hard to find a factor $l-\epsilon$ approximation of the minimum vertex cover of a given $l$-uniform hypergraph, for all $l\geq2$ and $\varepsilon>0$.
\end{fact}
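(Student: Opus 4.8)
The plan is to prove this by a reduction from Unique Games, invoking the Unique Games Conjecture itself. Recall that UGC asserts that for every $\zeta>0$ there is a label-set size $R$ for which it is NP-hard to distinguish, given a Unique Games instance $\mathcal{U}=(G(V,E),[R],\{\pi_{vw}\})$, between the \emph{YES} case, where some labeling $\sigma\colon V\to[R]$ satisfies at least a $1-\zeta$ fraction of the bijective constraints, and the \emph{NO} case, where every labeling satisfies at most a $\zeta$ fraction. From such an instance I would build a weighted $l$-uniform hypergraph $H$ whose minimum (fractional-weight) vertex cover is close to $1/l$ in the YES case and close to $1$ in the NO case. The ratio then approaches $l$, which yields $l-\epsilon$ hardness after taking $\zeta$ small and $R$ large. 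Throughout I will track \emph{independent sets} (edge-free vertex sets), the complements of vertex covers.

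\textbf{Construction.} Attach to each UG vertex $v$ a \emph{cloud} $\{v\}\times\{0,1\}^R$, and weight each cloud by the $p$-biased product measure $\mu_p$ on $\{0,1\}^R$ with $p=1-1/l$, the clouds averaged uniformly over $V$. Hyperedges are produced by a \textbf{single-zero test}: let $\nu$ be the uniform distribution on the $l$ strings in $\{0,1\}^l$ having exactly one $0$. The two properties that drive everything are that each marginal of $\nu$ equals $1-1/l=p$, yet $\nu$ never outputs the all-ones string. To sample a hyperedge, pick a random center $w\in V$, pick $l$ random neighbours $v_1,\dots,v_l$ with bijections $\pi_j=\pi_{v_jw}$, and for each coordinate $c\in[R]$ independently draw the column $(z^1_c,\dots,z^l_c)\sim\nu$; the hyperedge is $\{(v_j,\,z^j\circ\pi_j):1\le j\le l\}$, where $z\circ\pi$ permutes coordinates by $\pi$.

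\textbf{Completeness.} Given a labeling $\sigma$ satisfying almost all constraints, take $I=\{(v,x):x_{\sigma(v)}=1\}$. Each cloud contributes measure $\Pr_{\mu_p}[x_{\sigma(v)}=1]=1-1/l$, so $I$ has weight $1-1/l$. For a hyperedge whose underlying constraints are all satisfied, a direct check shows that the $l$ points $(v_j,z^j\circ\pi_j)$ all lie in $I$ exactly when the sampled column at coordinate $\sigma(w)$ is the all-ones string---which $\nu$ forbids. Hence $I$ is independent on all such hyperedges, and correcting for the $\zeta$-fraction of violated edges yields an independent set, and thus a vertex cover, of weight at most $1/l+O(\zeta)$.

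\textbf{Soundness (the crux).} Here I argue the contrapositive: if $H$ has an independent set $I$ of weight more than $\epsilon$, then $\mathcal{U}$ has a labeling satisfying more than a $\zeta$ fraction of constraints, contradicting the NO case. Writing $f_v\colon\{0,1\}^R\to\{0,1\}$ for the indicator of $I$ on cloud $v$, independence says the test almost never places all $l$ sampled points inside $I$. The heart is to show that a family of noticeably large $f_v$ obeying this constraint must have \emph{correlated influential coordinates}: one decodes a label for $v$ by choosing a coordinate whose low-degree influence in $f_v$, suitably averaged over $v$'s neighbourhood, exceeds a threshold, and the permutations $\pi_{vw}$ force these decoded labels to agree across edges. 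Quantitatively this rests on Fourier analysis of the $p$-biased cube together with the invariance principle: for functions in which every coordinate has negligible influence, the test's acceptance probability is controlled by a Gaussian analogue and stays bounded away from what a weight-$\epsilon$ independent set would need, so some coordinate must be influential. Calibrating the noise and influence thresholds so that completeness loses only $O(\zeta)$ while soundness forces near-juntas is the main obstacle; the rest is bookkeeping. Combining the two cases, the optimal vertex-cover weight jumps from $1/l+O(\zeta)$ to at least $1-\epsilon$, an $(l-\epsilon')$ gap for any desired $\epsilon'$ (taking $\epsilon=\epsilon'/l$ and $\zeta$ small), which establishes the claimed $l-\epsilon$ inapproximability for every $l\ge 2$ under UGC.
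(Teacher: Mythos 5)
This statement is not proved in the paper at all: it is imported verbatim as a known result of Khot and Regev and used as a black box inside the reduction of Theorem \ref{thm_hardness}. So there is no ``paper's own proof'' to compare against; the only fair benchmark is the original Khot--Regev argument. Measured against that, your construction is the correct skeleton and essentially theirs: the biased long code with $p=1-1/l$, the single-zero column test (marginals equal to $p$ but the all-ones column forbidden), and completeness yielding an independent set of weight about $1-1/l$, hence a vertex cover of weight about $1/l$. Where you diverge is the soundness machinery: you invoke low-degree influences plus the invariance principle, whereas Khot and Regev's actual tool is Friedgut's junta theorem applied to the cloud indicators $f_v$ (after monotonization/shifting), combined with a structural analysis of cross-intersecting families; the invariance-principle route is a later, legitimate alternative, but it is not automatic here.

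The genuine gap is that your ``Soundness (the crux)'' paragraph states what must be proved rather than proving it: the claim that a weight-$\epsilon$ independent set under the single-zero test forces correlated influential coordinates \emph{is} the entire technical content of the theorem, and you explicitly defer it (``the main obstacle; the rest is bookkeeping''). Moreover, the invariance principle does not apply off the shelf to your test: the column distribution $\nu$ is a forbidden-pattern distribution supported on a slice of the product space, not a noise operator, so one must either smooth the test with noise (and then track that completeness degrades from exact independence to near-independence, which matters for vertex cover, see below) or fall back on the Friedgut-style argument of the original. A second, smaller gap is in completeness: ``correcting for the $\zeta$-fraction of violated edges'' is not enough, because a vertex cover must cover \emph{every} hyperedge, so an almost-independent set gives no bound; the standard fix is to delete the entire clouds of Unique Games vertices incident to violated constraints and argue the deleted weight is small, which needs regularity of the instance and the right order of quantifiers between $\zeta$ and the degree. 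As written, your proposal is a correct roadmap to the Khot--Regev theorem but not a proof of it; for the purposes of this paper, citing the result, as the authors do, is the appropriate course.
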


\begin{fact}[Dinur et. al, \cite{DinurGKR05_VC_hardness}, Dinur and Safra, \cite{DinurS_STOC02_VC_hardness}]\label{fact_hard2}
It is NP-hard to find a factor $l-1-\epsilon$ (resp. $1.36-\epsilon$) approximation of the minimum vertex cover of a given $l$-uniform hypergraph, for all $l\geq3$ (resp. $l=2$) and $\varepsilon>0$.
\end{fact}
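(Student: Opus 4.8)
The statement collects two cited inapproximability results for hypergraph vertex cover, and I would establish each of them by exhibiting a PCP-style gap reduction: a polynomial-time map from a hard decision problem to $l$-uniform hypergraphs such that YES instances admit a small vertex cover while NO instances force an almost-complete one. For the case $l\geq 3$ (hardness factor $l-1-\varepsilon$) the plan is to reduce from a strong form of \textsc{Label Cover}. First I would invoke the PCP theorem together with Raz's parallel repetition to obtain a \textsc{Label Cover} instance that is NP-hard to distinguish between value $1$ and value $s$ for arbitrarily small $s>0$; more precisely I would use a \emph{multilayered} \textsc{Label Cover} with a smoothness property, since cross-layer consistency is exactly what will later let me drive the soundness down.

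Next I would encode each \textsc{Label Cover} vertex by a biased Boolean hypercube $\{0,1\}^R$ (the long code), weighting a string $x$ by $\mu_p(x)=p^{|x|}(1-p)^{R-|x|}$ for a bias $p$ chosen so that a single ``dictator'' coordinate forms a vertex cover of fractional weight tending to $\tfrac{1}{l-1}$. The $l$-uniform hyperedges would be sampled (after applying the projection constraints) from a correlated distribution on $l$-tuples $(x_1,\dots,x_l)$ designed so that in every coordinate the $l$ bits are never all zero; then every dictator set $\{x : x_j=1\}$ is automatically a legal cover. For completeness, a consistent labeling selects, block by block, the dictator coordinate given by the label, yielding a cover of fractional weight $\to\tfrac{1}{l-1}$.

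For soundness I would argue the contrapositive: a cover of weight much below $1$ means a large independent set, and the indicator of the independent set restricted to each block is a Boolean function on the biased cube that contains no full hyperedge. A Fourier-analytic influence-decoding argument on the forbidden-configuration structure of the test would then show that such a function must carry influential coordinates, which can be read off as a \textsc{Label Cover} labeling of value bounded below by a fixed constant, contradicting $s\to 0$. The gap between $\tfrac{1}{l-1}$ and (almost) $1$ produces the factor $l-1-\varepsilon$. For $l=2$ (graph vertex cover, factor $1.36-\varepsilon$), the simple influence-decoding no longer forces the NO-case cover to weight $1$, so I would follow the Dinur--Safra route: reduce from the same gap \textsc{Label Cover}, but analyze independent sets in a blockwise product graph using the Frankl--R\"odl theorem on families with a forbidden intersection, and optimize the bias and intersection parameters to obtain the explicit constant $10\sqrt{5}-21\approx 1.3606$.

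The main obstacle in both cases is the soundness analysis, namely converting ``small vertex cover'' into ``good \textsc{Label Cover} labeling.'' For $l\geq 3$ this is precisely where the multilayered PCP with smoothness is indispensable, since only the enforced cross-layer consistency lets the decoded labeling satisfy enough projection constraints to contradict soundness $s\to 0$; and pushing the gap all the way to $l-1$ (rather than a weaker constant) rests on the optimal choice of the bias $p$ together with the matching ``not-all-zero'' hyperedge distribution. For $l=2$ the difficulty is instead purely extremal-combinatorial: establishing the sharp Frankl--R\"odl bound with exactly the parameters needed to reach $1.3606$.
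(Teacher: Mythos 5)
This statement is a \emph{Fact}, not a theorem of the paper: it is quoted from Dinur, Guruswami, Khot and Regev \cite{DinurGKR05_VC_hardness} and from Dinur and Safra \cite{DinurS_STOC02_VC_hardness}, and the paper supplies no proof of it, using it purely as a black box in the reduction establishing the pseudo-approximation hardness of $(k,l)$-MSSMS. So there is no internal argument to compare yours against; the only question is whether your sketch would stand as a proof of the cited results themselves. As a reconstruction of how those papers proceed, your outline is essentially faithful: for $l\geq 3$, \cite{DinurGKR05_VC_hardness} does reduce from a multilayered \textsc{Label Cover} with a smoothness property (indeed the multilayered PCP is the paper's titular contribution), encodes vertices with a $p$-biased long code with the bias tuned so that a dictator cover has measure tending to $1/(l-1)$, and uses hyperedge tests forbidding an all-zero coordinate pattern; for $l=2$, the Dinur--Safra analysis does hinge on the Frankl--R\"odl forbidden-intersection theorem and yields exactly the constant $10\sqrt{5}-21\approx 1.3606$.

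The genuine gap is that your text has the shape ``the theorem follows once the hard part is done,'' and you yourself identify the hard part without supplying it. The soundness analyses you defer --- the Friedgut-style junta/influence decoding made compatible with smoothness and the multilayer structure for $l\geq 3$, and for $l=2$ not only Frankl--R\"odl but also the Ahlswede--Khachatrian-type bounds on $p$-biased intersecting families, which your sketch omits entirely --- constitute essentially the whole technical content of those two papers, each running to dozens of pages. Naming the correct tools and parameter choices is a competent summary, not a proof: nothing in your proposal verifies, for instance, that the decoded labeling satisfies a constant fraction of projection constraints, or that the extremal bounds hold with the sharp parameters needed to reach $l-1-\varepsilon$ and $1.3606$. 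For the purposes of this paper the right move is exactly what the authors did: state the result as an imported fact with citations, and spend no effort re-deriving it.
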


\begin{theorem}\label{thm_hardness}
Assuming the Unique Games Conjecture, it is NP-hard to pseudo-approximate the $(k,l)$-MSSMS problem on the uniform metric space with $n$ points, within any factor polynomial in $n$, using at most $k(l-\varepsilon)$ servers, for any fixed $l\geq 2$ and $\varepsilon>0$.
\end{theorem}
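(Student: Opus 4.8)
The plan is to reduce from the hardness of approximating vertex cover on $l$-uniform hypergraphs, exactly as licensed by Fact~\ref{fact_hard1}. Given an $l$-uniform hypergraph $H=(V,E)$ with $|V|=N$, I would invoke the standard gap form of the Khot--Regev result: for a parameter $\delta>0$ to be fixed, it is UG-hard to distinguish the \emph{yes} case, where $H$ has a vertex cover of size at most $(1/l+\delta)N$, from the \emph{no} case, where every vertex cover has size at least $(1-\delta)N$. From $H$ I build a $(k,l)$-MSSMS instance on the uniform metric space whose points are the $N$ vertices of $V$; the intended number of servers is $k=\lceil(1/l+\delta)N\rceil$, started at $k$ arbitrary vertices, and the request sequence $\rho$ consists of all hyperedges of $H$ (each an $l$-subset of $V$) listed in a fixed order and then repeated $T$ times, for a value $T$ chosen later.

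Next I would analyze the two cases. In the yes case a minimum vertex cover $C$ satisfies $|C|\le k$, so the optimal $k$-server solution moves servers onto $C$ once (total cost at most $k$, since every move in a uniform metric costs $1$) and thereafter serves every request for free, because each hyperedge meets $C$; hence $\opt(\rho)\le k$ independently of $T$. In the no case I would first choose $\delta$ small enough---concretely $\delta<\epsilon/(l(l-\epsilon+1))$ works, up to the lower-order rounding---so that $k(l-\epsilon)<(1-\delta)N$, the guaranteed lower bound on the minimum vertex cover. Then any algorithm using at most $k(l-\epsilon)$ servers occupies, at every instant, a set of points strictly smaller than every vertex cover, so its current configuration always misses some hyperedge. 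Consequently, in each of the $T$ passes through $E$ the algorithm must either move a server (cost $\ge1$) or face a request it cannot serve without a move (again cost $\ge1$); either way it pays at least $1$ per pass, so its total cost is at least $T$, regardless of where its servers start.

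Finally I would amplify the gap. Suppose for contradiction that a polynomial-time algorithm pseudo-approximates within a factor $\alpha=\poly(n)$ using at most $k(l-\epsilon)$ servers, where $n=N$ is the number of points. Choosing $T>\alpha k$ (still polynomial in $N$, since $\alpha$, $k\le N$, and $|E|\le\binom{N}{l}$ are all polynomial for fixed $l$), the algorithm's cost would be at most $\alpha\cdot\opt(\rho)\le\alpha k<T$ on yes instances but at least $T$ on no instances; thresholding its output cost at $T$ therefore decides the UG-hard distinguishing problem in polynomial time, a contradiction. As this argument goes through for every fixed polynomial $\alpha$, no polynomial-factor pseudo-approximation with $k(l-\epsilon)$ servers can exist under the UGC.

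The main obstacle, I expect, is the simultaneous calibration of the three parameters $\delta$, $k$, and $T$: $k$ must sit above the yes-case cover threshold while $k(l-\epsilon)$ stays strictly below the no-case threshold, which forces $\delta$ to be a suitably small function of $\epsilon$ and $l$ and demands care with the ceiling and the additive $(l-\epsilon)$ slack; at the same time $T$ must be large enough to defeat an arbitrary polynomial $\alpha(n)$ while keeping the constructed instance of size polynomial in $n=N$. Verifying that the Khot--Regev gap really takes the clean form $\bigl((1/l+\delta)N,\,(1-\delta)N\bigr)$ across the needed range of $\delta$, and that the per-pass lower bound is robust to the algorithm freely choosing its initial configuration, are the remaining points to pin down.
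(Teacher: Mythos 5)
Your proof is correct, and it rests on the same two pillars as the paper's own argument: a reduction from Khot--Regev hardness of vertex cover on $l$-uniform hypergraphs, and the pigeonhole observation that when all hyperedges are requested in each of many passes, an algorithm whose server set is too small to be a vertex cover must pay at least $1$ per pass (equivalently, a cheap solution must have a zero-cost pass whose configuration covers every hyperedge). Where you diverge is in how the contradiction is closed. The paper uses Fact~\ref{fact_hard1} exactly as stated, in its approximation (search) form: from the hypothetical MSSMS algorithm $\mathcal{A}$ it builds an algorithm $\mathcal{B}$ that, given a hypergraph promised to have a cover of size $k$ (the promise being discharged by trying all $k$), runs $\mathcal{A}$ on the uniform space $V\uplus W$ and \emph{extracts} the configuration of a zero-cost phase as a vertex cover of size $k(l-\varepsilon)$; no parameter calibration is needed. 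You instead invoke the gap form of the Khot--Regev theorem (hard to distinguish cover size at most $(1/l+\delta)N$ from at least $(1-\delta)N$), set $k=\lceil(1/l+\delta)N\rceil$, and use the contrapositive of the pigeonhole step: in the no case, $k(l-\varepsilon)$ servers can never cover all hyperedges, so the cost is at least $T$, and thresholding at $T$ decides the gap problem. Both routes are sound; your calibration $\delta<\varepsilon/\bigl(l(l-\varepsilon+1)\bigr)$ is exactly what is needed, and the rounding slack is indeed lower order. The trade-off is that the paper's version is more modular --- it consumes Fact~\ref{fact_hard1} verbatim and never touches the internal gap parameters --- whereas yours needs the gap statement, which is stronger than Fact~\ref{fact_hard1} as quoted (though it is what \cite{KhotR08} actually prove), in exchange for a self-contained decision reduction that never has to extract a cover from the MSSMS solution.
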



\begin{proof}
Suppose there is a polynomial time algorithm $\mathcal{A}$ which, on a metric space with $n$ points, can pseudo-approximate the solution of an instance of $(k,l)$-MSSMS using $K=k(l-\varepsilon)$ servers, within a factor $f(n)=\mathcal{O}(n^p)$ for some constant $p$ independent of $k$. We use this algorithm to construct a polynomial time algorithm $\mathcal{B}$ which, when given a $l$-uniform hypergraph having a vertex cover of size $k$, outputs a vertex cover of the hypergraph of size $K$. By running algorithm $\mathcal{B}$ for all $k$ and picking the best solution, we get a factor $l-\varepsilon$ approximation algorithm for the minimum vertex cover problem on $l$-uniform hypergraphs. This contradicts Fact \ref{fact_hard1}.

Algorithm $\mathcal{B}$ does the following. Suppose it is given a hypergraph $H=(V,E)$ where $|V|=n$, $E=\{E_1,\dots,E_m\}$, $E_i\subseteq V$ $|E_i|=l$, and an integer $k$, with the promise that $H$ has a vertex cover of size $k$. $\mathcal{B}$ takes the uniform metric space on the set $V\uplus W$ where $W=\{w_1,\ldots,w_k\}$. Then it constructs the instance of $(k,l)$-MSSMS, where one server is initially placed at each of the $w_i$'s and the request sequence is $E_1,\ldots,E_m$ repeated more than $kf(n+k)$ (but polynomially many) times. We will call each repetition of the sequence $E_1,\ldots,E_m$ a phase. $\mathcal{B}$ then uses algorithm $\mathcal{A}$ to find a pseudo-approximate solution of this instance, using $K$ servers.

Since the hypergraph has a vertex cover $V'\subseteq V$ of size $k$, the requests can be served by shifting the servers to the points in $V'$ once for all, at a cost $k$. Therefore, the cost of the solution returned by algorithm $\mathcal{A}$ is at most $kf(n+k)$, which is less than the number of phases. Thus, there exists a phase in which $\mathcal{A}$ incurs zero cost, which means that $\mathcal{A}$ does not shift servers during that phase at all. Since all the hyperedges are requested in each phase, the set of points occupied by the $K$ servers in this phase must be a vertex cover of $H$.

Having obtained a solution from algorithm $\mathcal{A}$, $\mathcal{B}$ simply searches for a phase in which $\mathcal{A}$ incurred zero cost, and returns the set of $K$ points occupied by the servers during this phase, as an approximate vertex cover of $H$. It is easy to see that 
$\mathcal{B}$ runs in polynomial time. \qed
\end{proof}

Using the same reduction as above, and Fact \ref{fact_hard2} instead of Fact \ref{fact_hard1}, we obtain the following hardness result, which does not rely on the validity of the UGC.

\begin{theorem}
It is NP-hard to pseudo-approximate the $(k,l)$-MSSMS problem on the uniform metric space with $n$ points, within any factor polynomial in $n$, using at most $k(l-1-\varepsilon)$ (resp. $(1.36-\varepsilon)k$) servers, for any fixed $l\geq 3$ (resp. $l=2$) and $\varepsilon>0$.
\end{theorem}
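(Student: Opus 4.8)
The plan is to reuse, essentially verbatim, the reduction from the proof of Theorem~\ref{thm_hardness}, changing only the number of servers allowed to the assumed algorithm and the inapproximability result that is ultimately contradicted. Concretely, I would suppose for contradiction that there is a polynomial time algorithm $\mathcal{A}$ which, on a uniform metric space with $n$ points, pseudo-approximates $(k,l)$-MSSMS within a factor $f(n)=\mathcal{O}(n^p)$ using $K$ servers, where $K=k(l-1-\varepsilon)$ in the case $l\geq3$ (resp. $K=(1.36-\varepsilon)k$ in the case $l=2$). From $\mathcal{A}$ I would build the same algorithm $\mathcal{B}$ as before: given an $l$-uniform hypergraph $H=(V,E)$ with the promise that it has a vertex cover of size $k$, $\mathcal{B}$ takes the uniform metric space on $V\uplus W$ with $|W|=k$, places one server at each point of $W$, and issues the request sequence $E_1,\ldots,E_m$ repeated more than $kf(n+k)$ (but polynomially many) times, calling each repetition a phase.

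The key steps are exactly those of Theorem~\ref{thm_hardness}, and I would simply invoke them. Since $H$ has a vertex cover of size $k$, the optimal solution moves the $k$ servers onto the cover once and for all, at cost $k$; hence $\mathcal{A}$'s solution costs at most $kf(n+k)$, which is strictly less than the number of phases. Therefore some phase incurs zero cost, during which $\mathcal{A}$ moves no server, so the set of $K$ points occupied by the servers throughout that phase hits every requested hyperedge and is thus a vertex cover of $H$. Algorithm $\mathcal{B}$ scans for such a zero-cost phase and outputs those $K$ points, running in polynomial time.

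Finally I would extract the approximation guarantee and the contradiction. Running $\mathcal{B}$ over all candidate values of $k$ and keeping the best output yields, on every $l$-uniform hypergraph, a vertex cover of size at most $K$ while the minimum is $k$, i.e. a factor $K/k = l-1-\varepsilon$ approximation when $l\geq3$ and $1.36-\varepsilon$ when $l=2$. This contradicts Fact~\ref{fact_hard2}. Because Fact~\ref{fact_hard2} is an unconditional NP-hardness result, the conclusion here does not rely on the Unique Games Conjecture, which is the only substantive difference from Theorem~\ref{thm_hardness}. I do not anticipate a genuine obstacle, since the reduction is unchanged; the single point requiring care is bookkeeping the server count, namely checking that plugging the thresholds $l-1-\varepsilon$ and $1.36-\varepsilon$ of Fact~\ref{fact_hard2} into the ratio $K/k$ gives exactly the server bounds claimed in the statement, and confirming that no step of the reduction secretly used the UGC-dependent Fact~\ref{fact_hard1}.
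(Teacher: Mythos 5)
Your proposal is correct and matches the paper's own argument exactly: the paper proves this theorem by reusing the reduction of Theorem~\ref{thm_hardness} verbatim and substituting Fact~\ref{fact_hard2} for Fact~\ref{fact_hard1}, which is precisely what you do, including the observation that this removes the dependence on the Unique Games Conjecture.
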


Theorem \ref{thm_hardness} implies that unless the UGC is false, there does not exist a competitive online algorithm for $(k,l)$-MSSMS, which uses $k(l-\varepsilon)$ servers and runs in polynomial time. This, in particular, implies the optimality of Feuerstein's polynomial time algorithm (Theorem 8 of \cite{Feuerstein98}) using $kl$ servers.

\section{Approaches that fail}\label{sec_limitations}

In this section we discuss a few popular techniques, that produced good algorithms for the $k$-server problem and metrical service systems, but which, in case of MSSMS on arbitrary metric spaces, either do not give competitive online algorithms, or their analysis involves more technical complications.

\subsection{The Harmonic Algorithm}

The Harmonic algorithm is a natural memoryless algorithm and was among the first studied randomized algorithms for the $k$-server problem. In order to serve a request at a point $y$ when the servers are at $x_1,\ldots,x_k$, the algorithm moves the server located at $x_i$ to $y$ with a probability inversely proportional to $d(x_i,y)$. This algorithm has a competitive ratio of $\mathcal{O}(k\cdot2^k)$ \cite{Grove91, BartalG00}. The natural analog of this algorithm for the MSSMS problem would be the algorithm which, when given a request $\{y_1,\ldots,y_l\}$ with the servers at $x_1,\ldots,x_k$, moves the server at $x_i$ to $y_j$ with probability $(d(x_i,y_j))^{-1}/\sum_{i,j}(d(x_i,y_j))^{-1}$.

The Harmonic algorithm, however, does not work even for metrical service system on the line. To see this, consider the metric space of integer points on the line. Let $k=1$, $l=2$ and let the server be initially located at $0$. For $i=1$ to $m$, let the $i$'th request in the sequence be $\{-1,i\}$, and let the $m+1$'st request be $\{-1\}$. The optimal solution is to move the server to $-1$, and this costs $1$. However, it is easy to verify that with probability $1/(i+1)$ the server is at $i$ after the $i$th request. This implies that the interval $[i-1,i]$ is traversed twice with a probability $1/(i+1)$, and not traversed at all with a probability $i/(i+1)$. The interval $[-1,0]$ is traversed once, with probability $1$. The expected cost incurred by the algorithm is thus $1+2\sum_{i=1}^{m}1/(i+1)=2H_m-1$. 

\subsection{The Work Function Algorithm}

The technique of work functions has given the best known deterministic online algorithms, both for the $k$-server problem as well as MSS. It seems to be the most promising algorithm for the MSSMS problem too. The work function algorithm can be described as follows.

Let $(M,d)$ be a metric space. We can think of this metric space as a complete edge-weighted graph on the vertex set $M$, where the weight on the edge $(x,y)$ is $d(x,y)$. A server configuration is a multiset of server positions on $M$. Thus, in case of the $k$-server problem, a server configuration is a multiset of $k$ points from $M$, whereas, in case of the MSS problem, a server configuration is a single point. Given server configurations $X$ and $Y$, let $d(X,Y)$ denote the minimum cost that needs to be paid, in order to change configuration $X$ to $Y$. In other words, $d(X,Y)$ is the weight of the minimum weight matching between the sets $X$ and $Y$. It is easy to see that with this definition, $d$ is a metric on the set of server configurations.

Let $X_0$ denote the initial server configuration and $\rho=(R_1,R_2,\ldots)$ be the sequence of requests in an instance of the problem under consideration ($k$-server, MSS or MSSMS). Let $\rho_i$ denote the prefix of $\rho$ with the first $i$ requests. For any sequence $\rho'$ of requests, the work function $w_{\rho'}$  
is a function on the set of server configurations. The value of this function at a server configuration $X$ is defined to be the minimum cost of starting from $X_0$, serving the request sequence $\rho'$, and ending up with the configuration $X$. Thus, if $w$ is the work function after the entire sequence of requests, then the minimum value of $w(X)$, over all configurations $X$, is the optimum cost of serving the requests.

The work function algorithm (WFA) works as follows. Let $w_i=w_{\rho_i}$. Suppose that the online algorithm has the server configuration $X_{i-1}$ after serving the first $i$ requests. In order to serve the request $R_i$, the algorithm changes the server configuration to $X_i$, which serves $R_i$, and which is such that $w_i(X_i)+d(X_{i-1},X_i)\leq w_i(Y)+d(X_{i-1},Y)$ for every server configuration $Y$. (It can be shown that such an $X_i$ must exist.) This algorithm achieves a competitive ratio of $2k-1$ for the $k$-server problem \cite{KoutsoupiasP95}, while a variant of this algorithm is $\mathcal{O}(l\cdot2^l)$-competitive for MSS with $l$ points per request \cite{Burley96}. It is therefore, natural to expect that WFA for MSSMS will achieve some competitive ratio, which is a function of $k$ and $l$ only. Below, we explain the roadblocks encountered in our attempts to prove this claim along the lines of \cite{Burley96} and \cite{KoutsoupiasP95}.

It can be easily shown that for any work function $w$, and configurations $X$ and $Y$, we have $w(X)\leq w(Y)+d(X,Y)$. A configuration $Y$ is said to \textit{dominate} a configuration $X$ in $w$ if $w(X)=w(Y)+d(X,Y)$. The \textit{support} of a work function is defined to be the set of configurations $X$ such that $X$ is not dominated by any other configuration $Y$, in $w$. It is also easily seen that each such $X$ must serve the last request in the sequence which results in the work function $w$. In case of MSS with $l$ points per request, the size of the support of any work function is at most $l$, and in fact, the support is a subset of the last request. The proof of the competitive ratio of WFA, by Burley \cite{Burley96}, uses this fact crucially. This property, however, does not remain true for MSSMS. The counterexample is as follows. Let $k=l=2$, the metric space be uniform, and the initial configuration be $\{x_0,x'_0\}$. For $i=1$ to $m$, let the $i$'th request be $\{x_i,x'_i\}$, where the points $x_0,x'_0,x_1,x'_1,\ldots,x_m,x'_m$ are all distinct. It is easy to verify that the support of the work function after this sequence contains all $2$-element subsets of $\{x_0,x'_0,x_1,x'_1,\ldots,x_m,x'_m\}$, which contain either $x_m$ or $x'_m$. Thus, the size of support of a work function for MSSMS could depend on the number of points in the metric space, even for fixed $k$ and $l$. It is, therefore, unlikely that the ideas in Burley's proof can be easily extended for MSSMS.

In the case of $k$-server problem, the proof of the competitive ratio of WFA, by Koutsoupias and Papadimitriou \cite{KoutsoupiasP95}, depends on a property of work functions, which they call \textit{quasi-convexity}. The quasi-convexity property states that for any work function $w_{\rho}$ arising from an instance $\rho$ of the $k$-server problem, and any two configurations $X$ and $Y$, there exists a bijection $g:X\longrightarrow Y$, such that for any $X'\subseteq X$, $w(g(X')\cup(X\setminus X'))+w(X'\cup g(X\setminus X'))\leq w(X)+w(Y)$. This property is also not true for work functions in case of MSSMS. As a counterexample, consider the uniform metric space. Let the initial configuration be $\{1,2\}$ and let the request sequence be $\{3,4\},\{5,6\},\{3,6\},\{5,4\}$. If $w$ is the work function after this sequence, then $w(\{3,5\})=w(\{4,6\})=2$ and the value of $w$ at any other configuration is more than $2$. Thus $w$ is not quasi-convex. It would, therefore, be interesting to investigate whether there exists an analog of quasi-convexity for $l>1$.

\section{Open Problems}\label{sec_open_problems}

We conclude with a number of interesting problems left open. The most important problem among these is the following.
\begin{problem}
Design an $f(k,l)$-competitive deterministic / randomized algorithm for $(k,l)$-MSSMS on arbitrary metric spaces, for some function $f$.
\end{problem}

For $(k,l)$-MSSMS on arbitrary metric spaces, we do not have an online algorithm with competitive ratio determined by $k$ and $l$ alone. We believe that such an algorithm exists, and the Work Function Algorithm is the prime candidate for the deterministic case. However, it is not known whether the Work Function Algorithm is competitive, even for the Generalized $3$-server problem. We know that if such a deterministic algorithm exists, it must perform a super-polynomial amount of computation on the input. It would be interesting even to find a constant factor competitive algorithm for $(2,2)$-MSSMS on special metric spaces, such as the line.

The lower bound obtained in Section \ref{sec_general_lb} on the competitive ratio of such a deterministic algorithm is polynomial in $l$ for a fixed $k$, and vice versa. We believe that a better lower bound exists, which is exponential in $l$ for a fixed $k$, and at least polynomial in $k$ for a fixed $l$. Such a lower bound can possibly proved by a construction which combines our ideas and the ideas in the $\Omega(2^l)$ lower bound proof for MSS by Fiat et. al. \cite{FiatFKRRV98}. In the randomized case, we believe that a lower bound which is polynomial in $l$ and logarithmic in $k$ should hold.

\begin{problem}
Prove an exponential (resp $\Omega(\poly(l)\cdot\log k)$) lower bound on the competitive ratio of any deterministic (resp. randomized) algorithm for $(k,l)$-MSSMS.
\end{problem}

For deterministic algorithms on uniform metric spaces, our upper and lower bounds on the competitive ratio differ by the factor $k$. We believe that the upper bound can be improved by carefully choosing the hitting sets, in the Hitting Set algorithm, described in Section \ref{sec_hitting_set}. Similarly, there is a significant gap between the 
bounds in case of randomized algorithms. In the upper bound (resp. lower bound) results, we have made the conservative assumption that whenever the algorithm (resp. adversary) faults, it can potentially shift all the $k$ servers, and hence the cost incurred could be $k$, while the adversary (resp. algorithm) shifts at most one server per fault. This assumption introduces a gap of a factor of $k$ between the bounds. 

\begin{problem}
For $(k,l)$-MSSMS on uniform metric spaces, close the gap between the lower bound of ${{k+l}\choose{l}}-1$ and the upper bound of $k\cdot\left({{k+l}\choose{l}}-1\right)$ in the deterministic case, and the lower bound of $\Omega(\log kl)$ and the upper bound of $\mathcal{O}(k^3\log l)$ in the randomized case.
\end{problem}

\bibliographystyle{plain}
\bibliography{references.bib,references_misc.bib,references_hardness.bib,references_generalized_kserver.bib}
\end{document}